\documentclass[11pt,a4paper]{article}

\usepackage[utf8]{inputenc}
\usepackage[T1]{fontenc}

\usepackage[margin=2.5cm]{geometry}

\usepackage{amsmath,amssymb,amsthm}
\usepackage{mathtools}
\usepackage{bm}

\usepackage{graphicx}
\usepackage{booktabs}
\usepackage{tabularx}
\usepackage{multirow}
\usepackage{threeparttable}
\usepackage{array}
\usepackage{dcolumn}

\usepackage{hyperref}
\usepackage[round,authoryear]{natbib}
\usepackage{url}

\usepackage{setspace}
\usepackage{xcolor}
\usepackage{lineno}

\graphicspath{{./images/}}

\newtheorem{theorem}{Theorem}[section]
\newtheorem{proposition}[theorem]{Proposition}

\theoremstyle{definition}

\theoremstyle{remark}
\newtheorem{remark}[theorem]{Remark}

\begin{document}

\title{\textbf{Parametric ROC Analysis and Optimal Cutoff Selection
       under Scale Mixtures of Skew-Normal Distributions:
       A Decision-Theoretic Framework with Asymptotic Inference}}

\author{
  Renato de Paula\textsuperscript{1,2}\thanks{Corresponding author. E-mail: \texttt{rrpaula@ciencias.ulisboa.pt}}
  \and
  Helena Mouriño\textsuperscript{1,2}
  \and
  Tiago Dias Domingues\textsuperscript{1,2}
}

\date{
  \textsuperscript{1}Departamento de Ciências Matemáticas,
  Faculdade de Ciências, Universidade de Lisboa, Lisboa, Portugal\\
  \textsuperscript{2}Centro de Estatística e Aplicações (CEAUL),
  Faculdade de Ciências, Universidade de Lisboa, Lisboa, Portugal
}

\maketitle

\begin{abstract}
Continuous biomarkers must often be converted into binary diagnostic
decisions through the choice of a single classification threshold.
In practice, this threshold should reflect not only the biomarker's
discriminatory ability, as summarised by the ROC curve, but also disease
prevalence and the relative costs of false-positive and false-negative
errors. The widely used Youden index implicitly assumes equal costs and
equal prevalences, assumptions that are often unrealistic. In addition,
biomarkers from serological and immunological studies frequently exhibit
skewness and heavy tails, which limit the adequacy of Gaussian-based
ROC models.

We develop a unified parametric framework for ROC analysis and optimal
cutoff selection under the family of scale mixtures of skew-normal
(SMSN) distributions, including the skew-normal and skew-$t$ models.
The ROC curve and AUC are estimated using plug-in maximum-likelihood
methods based on separate-group fits. The optimal cutoff is defined by
minimising a weighted misclassification risk that incorporates disease
prevalence and asymmetric costs, leading to a likelihood-ratio equation
that generalises the Youden criterion. Under a monotone likelihood ratio
condition, we establish existence, uniqueness, and global optimality of
the cutoff, derive a first-order expansion of its departure from the
Youden solution, and obtain consistency, asymptotic normality, and a
closed-form plug-in variance estimator. A key component of the variance
is the local slope of the estimating equation at the optimal threshold,
which we interpret as a local identifiability diagnostic.

Monte Carlo simulations across six scenarios confirm the accuracy of
the asymptotic approximation and the nominal coverage of Wald confidence
intervals. An analysis of SARS-CoV-2 serological data shows that the
proposed cutoff can differ substantially from the Youden threshold and
reduce the estimated misclassification risk by up to 63\% under
asymmetric decision settings.
\end{abstract}

\smallskip
\noindent\textbf{Keywords:} ROC curve; optimal cutoff; Youden index;
skew-$t$ distribution; scale mixtures of skew-normal distributions;
asymptotic inference.

\bigskip

\section{Introduction}
\label{sec:introduction}

The selection of a diagnostic threshold for a continuous biomarker is a
routine but consequential clinical decision. In screening programmes for
infectious diseases, in stratifying patients by risk of progression, and
in monitoring treatment response, laboratory tests must ultimately yield
a binary classification: positive or negative. The threshold that
determines this classification has direct implications for patient care:
setting it too low increases the number of unnecessary confirmatory
procedures and associated costs; setting it too high misses cases that
would benefit from early intervention. Despite its clinical importance,
threshold selection is often performed using criteria that implicitly
assume that false positives and false negatives are equally costly and
that the disease is as prevalent as its absence, which rarely holds in
practice. In a serological survey conducted in a low-prevalence
population, for instance, a symmetric criterion such as the Youden index
will generate an excess of false positives that may overwhelm
confirmatory testing capacity; conversely, in a high-risk cohort, a
threshold optimised without accounting for the elevated cost of missed
diagnoses may substantially underdetect cases.

The receiver operating characteristic (ROC) curve is the standard
framework for evaluating the discriminatory capacity of a continuous
biomarker across all possible thresholds
\citep{metz1978basic,pepe2003statistical,zweig1993receiver}. By
plotting sensitivity against one minus specificity as the threshold
varies, it summarises the complete trade-off between the two types of
classification error and provides a threshold-free summary of accuracy
through the area under the curve (AUC), which equals the probability
that a randomly selected diseased individual has a higher marker value
than a randomly selected healthy individual
\citep{bamber1975area,hanley1982meaning}. The ROC framework has been
studied under both nonparametric \citep{delong1988comparing} and
parametric assumptions \citep{metz1978basic,pepe2003statistical}, with
the binormal model being the most widely used parametric approach
\citep{dorfman1969maximum,metz1998proper}.

Once a biomarker has been evaluated through its ROC curve, clinical
implementation requires selecting a single operating point. The most
widely used selection criterion is the Youden index
\citep{youden1950index}, which maximises the sum of sensitivity and
specificity, and corresponds geometrically to the point on the ROC curve
where the tangent has unit slope \citep{metz1978basic}. Statistical
methods for estimating the Youden cutoff under normality were developed
by \citet{fluss2005estimation} and extended to more general settings by
\citet{schisterman2005optimal} and \citet{perkins2006inconsistency}.
However, the Youden criterion implicitly assigns equal costs to the two
error types and treats the disease as equally prevalent to its absence.
\citet{mcintosh2002combining} established that, under a weighted
misclassification risk incorporating disease prevalence and asymmetric
costs, the optimal threshold corresponds to the point on the ROC curve
whose tangent slope equals the cost-prevalence ratio, recovering the
Youden solution only when this ratio equals one. In epidemiological
settings where disease is rare, or where the consequences of the two
error types differ substantially, the Youden threshold can be far from
optimal and may yield considerably higher expected costs than a properly
calibrated decision-theoretic cutoff.

A parametric approach to ROC analysis and cutoff estimation requires
specifying distributional models for the biomarker in each group. In
serological and immunological applications, antibody concentrations are
characteristically right-skewed and heavy-tailed: a small proportion of
individuals exhibit extreme values that are not well captured by
Gaussian or log-normal models, and these extreme values may be
diagnostically informative \citep{lachos2010likelihood}. Parametric ROC
models under log-normal, gamma, and beta distributions have been
proposed, but these families may be insufficient for data that are
simultaneously skewed and heavy-tailed, as commonly observed in
serological studies of infectious disease.

The family of scale mixtures of skew-normal (SMSN) distributions
provides a rich and flexible framework that encompasses the skew-normal
\citep{azzalini1985class}, skew-$t$ \citep{azzalini2003distributions},
skew-slash \citep{wang2004skew}, and skew-contaminated normal
distributions as special cases. The family accommodates right or left
skewness and varying degrees of tail heaviness, and reduces to the
Gaussian family when the skewness parameter is zero, making it a
natural generalisation of the binormal ROC model.
\citet{azzalini2014skew} provide a comprehensive treatment of the SMSN
family, and \citet{arevalillo2012maximum} derive explicit asymptotic
results for maximum likelihood estimation in the skew-normal and
skew-$t$ submodels. Applications to serological and diagnostic data
have demonstrated substantial advantages of SMSN models over Gaussian
alternatives \citep{lachos2010likelihood,diasdomingues2024classification,
domingues2024analysis}. Despite this, a systematic treatment of ROC
curve estimation, AUC inference, and optimal cutoff selection within the
SMSN family that integrates decision-theoretic optimality with formal
asymptotic inference has not, to our knowledge, been developed. The
present paper addresses this gap through three integrated contributions.

\smallskip

Initially, we develop a fully parametric ROC
model within the SMSN family, deriving the ROC curve and AUC in closed
form as smooth functionals of the SMSN parameters, and providing
consistent plug-in estimators based on separate maximum-likelihood fits
in the two groups.

\smallskip

Next, we establish a rigorous decision-theoretic framework for optimal cutoff selection. The optimal
threshold is characterised as the unique solution to a likelihood-ratio
equation that generalises the Youden criterion to settings with
asymmetric misclassification costs and unequal disease prevalence.
Under a monotone likelihood ratio condition, we prove existence,
uniqueness, and global optimality of the proposed cutoff, and derive a
first-order expansion that quantifies the displacement of the optimal
cutoff from the Youden solution as a function of the cost-prevalence
ratio and the local curvature of the log-likelihood ratio at the Youden
point.

\smallskip

Finally, we derive the joint asymptotic
distribution of the plug-in estimators of the ROC curve, AUC, and
optimal cutoff via the implicit function theorem and the multivariate
delta method. A key structural feature of the resulting asymptotic
variance is that its denominator coincides with the local slope of the
estimating equation at the optimal threshold, a quantity we term the
local identifiability diagnostic. This quantity measures how
sharply the decision boundary is defined, and it provides a practical
tool for assessing the reliability of the estimated cutoff before
reporting it clinically.

\smallskip

The theoretical results are validated through a Monte Carlo simulation
study covering six scenarios spanning skew-normal and skew-$t$
distributions with varying degrees of class separation, cost-prevalence
asymmetry, and tail heaviness. The methodology is illustrated through an
application to SARS-CoV-2 serological data, in which the proposed
cutoff reduces the estimated misclassification risk by up to 63\%
relative to the Youden threshold under asymmetric decision settings.

The remainder of the paper is organised as follows.
Section~\ref{sec:smsn} introduces the SMSN family and its principal
submodels. Section~\ref{sec:setup} establishes the notation and model
structure. Section~\ref{sec:roc} develops the parametric ROC model and
AUC estimator. Section~\ref{sec:cutoff} presents the decision-theoretic
framework, its theoretical properties, and asymptotic inference.
Section~\ref{sec:simulation} reports the simulation study.
Section~\ref{sec:application} presents the application to SARS-CoV-2
serological data. Section~\ref{sec:discussion} concludes.

\section{Scale mixtures of skew-normal distributions}
\label{sec:smsn}

\subsection{Definition and stochastic representation}

Let $Z$ be a continuous random variable.
We say that $Z$ follows a scale mixture of skew-normal
(SMSN) distribution with location parameter $\xi \in \mathbb{R}$,
scale parameter $\omega > 0$, shape parameter $\alpha \in \mathbb{R}$,
and mixing distribution $H(\cdot;\nu)$ indexed by $\nu$,
written $Z \sim \mathrm{SMSN}(\xi,\omega,\alpha;H)$, if it admits the
stochastic representation
\[
Z=\xi+\frac{W}{\sqrt{U}},
\]
where $U>0$ almost surely has distribution $H(\cdot;\nu)$,
independent of $W \sim \mathrm{SN}(0,\omega^2,\alpha)$. Conditionally on $U=u$, one has
\[
Z \mid U=u \sim \mathrm{SN}\!\left(\xi,\frac{\omega^2}{u},\alpha\right),
\qquad u>0.
\]
Hence the marginal density of $Z$ is
\begin{equation}
\label{eq:smsn_density}
f_Z(z;\xi,\omega,\alpha,\nu)
=
\int_0^{\infty}
\frac{2\sqrt{u}}{\omega}\,
\phi\!\left(\frac{z-\xi}{\omega/\sqrt{u}}\right)
\Phi\!\left(\frac{\alpha(z-\xi)}{\omega/\sqrt{u}}\right)
\, dH(u;\nu),
\end{equation}
$z\in\mathbb{R}$, where $\phi$ and $\Phi$ denote the standard normal density and distribution function, respectively.
When $\alpha=0$, equation~\eqref{eq:smsn_density} reduces to the
density of a symmetric scale mixture of normals \citep{andrews1974scale}.

\subsection{Particular cases: skew-normal and skew-t distributions}

The two submodels used throughout this paper are the following.\\

\noindent
\textbf{Skew-normal (SN) distribution}:
If $U\equiv 1$, then
\[
Z \sim \mathrm{SN}(\xi,\omega^2,\alpha),
\]
with density
\[
f_Z(z)
=
\frac{2}{\omega}\,
\phi\!\left(\frac{z-\xi}{\omega}\right)
\Phi\!\left(\frac{\alpha(z-\xi)}{\omega}\right),
\qquad z\in\mathbb{R}.
\]
The parameter $\alpha>0$ ($\alpha<0$) produces right-skewness
(left-skewness), whereas $\alpha\!=\!0$ reduces to the Gaussian model, with mean $\xi$ and variance $\omega^2$,
$\mathcal{N}(\xi,\omega^2)$. The skew-normal distribution was introduced by \citet{azzalini1985class}.\\

\noindent
\textbf{Skew-$t$ (ST) distribution}:
If $U=V/\nu$ with $V\sim\chi^2_\nu$, then
\[
Z \sim \mathrm{ST}(\xi,\omega^2,\alpha,\nu),
\qquad \nu>2,
\]
in the Azzalini--Capitanio parametrisation \citep{azzalini2003distributions}.
The skew-$t$ extends the skew-normal model by allowing heavier tails,
and converges to the skew-normal distribution as $\nu\to\infty$.
When $\alpha=0$, it reduces to the symmetric Student-$t$ model
with location $\xi$, scale $\omega$, and $\nu$ degrees of freedom. Throughout this paper, we use the unconstrained parametrisation
\begin{equation}
\label{eq:parametrisation}
\theta =
(\xi,\log\omega,\alpha)
\quad\text{(SN)},
\qquad
\theta =
(\xi,\log\omega,\alpha,\log(\nu-2))
\quad\text{(ST)},
\end{equation}
which ensures $\omega>0$ and, in the skew-$t$ case, $\nu>2$
throughout numerical optimisation.

\section{Diagnostic model and unified notation}
\label{sec:setup}

The following notation and model structure are used throughout
Sections~\ref{sec:roc} and~\ref{sec:cutoff} without further
redefinition.

Let $D \in \{0,1\}$ denote the disease status of an individual,
where $D = 0$ corresponds to the non-diseased (seronegative)
individual and $D = 1$ to the diseased (seropositive) individual.
Let $X$ be a continuous biomarker measured on each individual,
and denote by $\pi_k = P(D = k)$, $k = 0, 1$, the class
prevalences, with $\pi_0, \pi_1 > 0$ and $\pi_0 + \pi_1 = 1$. Conditionally on $D = k$, the biomarker follows a member of
the SMSN family (Section~\ref{sec:smsn}):
\[
X \mid D = k \;\sim\; \mathrm{SMSN}(\xi_k, \omega_k, \alpha_k; H_k),
\quad k = 0, 1.
\]
We denote the cumulative distribution function (CDF) and
probability density function (PDF) of $X \mid D = k$ by
$F_k(\cdot;\theta_k)$ and $f_k(\cdot;\theta_k)$, respectively,
where
\[
\theta_k \in \Theta_k \subset \mathbb{R}^{p_k}
\]
is the parameter vector of group $k$ in the unconstrained
parametrisation~\eqref{eq:parametrisation},
with $p_k = 3$ for the SN and $p_k = 4$ for the ST submodel.
The joint parameter vector is
\[
\theta = (\theta_0^\top, \theta_1^\top)^\top
\;\in\;
\Theta = \Theta_0 \times \Theta_1 \subset \mathbb{R}^{p_0+p_1} = \mathbb{R}^{p}.
\]

For a threshold $c$ in the admissible interval
$\mathcal{C} = [a,b] \subset \mathbb{R}$, define the
classification rule
\begin{equation}
\label{eq:decision-rule}
\delta_c(x)
=
\begin{cases}
1, & x > c, \\
0, & x \leq c,
\end{cases}
\end{equation}
so that individuals with $X \leq c$ are classified as
seronegative and those with $X > c$ as seropositive.
The interval $\mathcal{C}$ is interpreted as a compact decision
region covering the clinically plausible range of the marker;
it is chosen to contain the effective support of both fitted
distributions.

\subsection{False positive and true positive fractions}
Under the classification rule~\eqref{eq:decision-rule},
the false positive fraction (FPF) and true positive fraction
(TPF) at threshold $c$ are
\begin{equation}
\label{eq:fpf-tpf}
\mathrm{FPF}(c;\theta_0)
= 1 - F_0(c;\theta_0),
\qquad
\mathrm{TPF}(c;\theta_1)
= 1 - F_1(c;\theta_1).
\end{equation}

\noindent
Equivalently, the sensitivity and specificity at threshold $c$ are
\begin{equation}
\label{eq:se-sp}
\begin{split}
\mathrm{Se}(c;\theta_1) &= \mathrm{TPF}(c;\theta_1) = 1 - F_1(c;\theta_1), \\
\mathrm{Sp}(c;\theta_0) &= 1 - \mathrm{FPF}(c;\theta_0) = F_0(c;\theta_0).
\end{split}
\end{equation}

\subsection{Misclassification costs}
A false negative (FN) occurs when $D = 1$ but $\delta_c(X) = 0$;
a false positive (FP) occurs when $D = 0$ but $\delta_c(X) = 1$.
We assign positive costs $\lambda_0 > 0$ and $\lambda_1 > 0$
to FPs and FNs, respectively, reflecting the potentially
asymmetric consequences of the two error types.

\begin{remark}[Prevalence versus sampling proportions]
\label{rem:prevalence}
\normalfont
The prevalences $\pi_0$ and $\pi_1$ refer to the target
\emph{population}, not to the sampling proportions $n_0/n$
and $n_1/n$.
In case-control studies, the latter do not estimate $\pi_k$
and must not be used as substitutes.
Throughout this paper, $\pi_k$ are treated as known external
quantities, specified from epidemiological evidence or
decision-theoretic considerations. The extension to estimated
prevalences is discussed in Section~\ref{sec:discussion}.
\end{remark}


\section{Parametric ROC model under SMSN distributions}
\label{sec:roc}

\subsection{The ROC curve}

Using the FPF and TPF defined in~\eqref{eq:fpf-tpf}, the
parametric ROC curve under the SMSN model is obtained by
eliminating $c$: setting $r = \mathrm{FPF}(c;\theta_0)$
and inverting,
$c = F_0^{-1}(1-r;\theta_0) =: Q_0(1-r;\theta_0)$,
one obtains
\begin{equation}
\label{eq:roc-smsn}
\mathrm{ROC}_{\mathrm{SMSN}}(r;\theta)
= 1 - F_1\!\bigl(Q_0(1-r;\theta_0);\theta_1\bigr),
\quad r \in (0,1).
\end{equation}
Equation~\eqref{eq:roc-smsn} generalises the binormal ROC
model by replacing Gaussian CDFs with SMSN CDFs;
since neither $F_k$ nor $Q_0$ has a closed form in general,
both are evaluated numerically via the
\texttt{psn}/\texttt{pst} and \texttt{qsn}/\texttt{qst}
functions of the \texttt{sn} package \citep{azzalini2020package}.

A key identity, which holds for any pair of absolutely
continuous distributions, connect the
local geometry of the ROC curve to the likelihood ratio:
\begin{equation}
\label{eq:roc-slope}
\frac{d\,\mathrm{TPF}}{d\,\mathrm{FPF}}\bigg|_c
= \frac{f_1(c;\theta_1)}{f_0(c;\theta_0)}
=: \Lambda(c;\theta).
\end{equation}
This identity plays a central role in both the Youden index
(Section~\ref{subsec:youden}) and the decision-theoretic
optimal cutoff (Section~\ref{sec:cutoff}).


\subsection{The Youden index}
\label{subsec:youden}

The Youden index \citep{youden1950index} is the most widely
used threshold selection criterion derived directly from the
ROC curve.
For a threshold $c \in \mathcal{C}$, it is defined as the
vertical distance between the ROC curve and the diagonal
corresponding to a non-informative classifier. Using the sensitivity $\mathrm{Se}(c;\theta_1)$ and specificity
$\mathrm{Sp}(c;\theta_0)$ defined in
equation~\eqref{eq:se-sp}, the Youden index at threshold $c$ is
\begin{equation}
\label{eq:youden-index}
J(c;\theta)
= \mathrm{Se}(c; \theta_1) + \mathrm{Sp}(c;\theta_0) - 1
= F_0(c;\theta_0) - F_1(c;\theta_1).
\end{equation}
The population Youden cutoff is the maximiser of
$J(\cdot;\theta)$ over $\mathcal{C}$:
\begin{equation}
\label{eq:youden-cutoff}
c_Y(\theta)
= \arg\max_{c \in \mathcal{C}}\,J(c;\theta).
\end{equation}

\noindent
Differentiating $J(c;\theta)$ with respect to $c$ gives
\[
\partial_c J(c;\theta)
= f_0(c;\theta_0) - f_1(c;\theta_1),
\]
so any interior maximiser satisfies
$f_1(c_Y;\theta_1) = f_0(c_Y;\theta_0)$, equivalently
\begin{equation}
\label{eq:youden-condition}
\Lambda(c_Y(\theta);\theta) = 1.
\end{equation}
By equation~\eqref{eq:roc-slope}, condition
\eqref{eq:youden-condition} states that the Youden cutoff
corresponds to the point on
$\mathrm{ROC}_{\mathrm{SMSN}}$ whose tangent has
unit slope \citep{metz1978basic, pepe2003statistical}.

The plug-in Youden estimator under the SMSN model is
\begin{equation}
\label{eq:youden-estimator}
\hat{c}_Y
= \arg\max_{c \,\in \,\mathcal{C}}\,
  J(c;\hat\theta)
= \arg\max_{c \, \in \, \mathcal{C}}\,
  \bigl[\hat F_0(c;\hat\theta_0) - \hat F_1(c;\hat\theta_1)\bigr],
\end{equation}
where $\hat\theta_0$ and $\hat\theta_1$ are the separate
maximum likelihood estimators of the two groups.
For comparison, the nonparametric empirical Youden
estimator is
\begin{equation}
\label{eq:youden-empirical}
\hat{c}_Y^{\,\mathrm{emp}}
= \arg\max_{c \in \mathcal{C}}
  \left\{
    \frac{1}{n_1}\sum_{j=1}^{n_1}\mathbf{1}(X_{1j} > c)
    +
    \frac{1}{n_0}\sum_{i=1}^{n_0}\mathbf{1}(X_{0i} \leq c)
    - 1
  \right\},
\end{equation}
which provides a useful nonparametric benchmark in
simulation studies and empirical analyses.



\subsection{Area under the ROC curve}
\label{subsec:auc}

The AUC summarises global discriminatory performance by
integrating the ROC curve over all operating points.
It equals $P(X_1 > X_0)$ where
$X_k \sim F_k(\cdot;\theta_k)$ are independent, and assume the representation
\begin{equation*}
\label{eq:auc-smsn}
\mathrm{AUC}_{\mathrm{SMSN}}(\theta)
= \int_0^1 \mathrm{ROC}_{\mathrm{SMSN}}(r;\theta)\,dr
= \int_0^1
  \bigl[1 - F_1\bigl(Q_0(1-r;\theta_0);\theta_1\bigr)\bigr]\,dr,
\end{equation*}
obtained via equation~\eqref{eq:roc-smsn}.
Equivalently, the change of variable $u = 1-r$ gives
\[
\mathrm{AUC}_{\mathrm{SMSN}}(\theta)
= \int_0^1
  \bigl[1 - F_1\bigl(Q_0(u;\theta_0);\theta_1\bigr)\bigr]\,du,
\]
which coincides with the representation $P(X_1 > X_0)$
obtained via $u = F_0(x;\theta_0)$ \citep{bamber1975area}.
The plug-in estimator replaces $\theta$ by $\hat\theta$ and
approximates the integral by the trapezoidal rule:
\begin{equation*}
\label{eq:auc-numerical}
\widehat{\mathrm{AUC}}(\theta)_{\mathrm{SMSN}}
= \frac{1}{h}\!\left[
    \frac{1}{2}
    + \sum_{i=1}^{h-1}
      \Bigl(1 - \hat F_1\bigl(
        \hat Q_0(u_i;\hat\theta_0);\hat\theta_1
      \bigr)\Bigr)
  \right],
\quad u_i = \frac{i}{h}.
\end{equation*}
\noindent
The trapezoidal rule with $h$ equally spaced points $u_i = i/h$,
$i = 1,\ldots,h-1$, has step size $\delta = 1/h$ and approximation
error $O(\delta^2) = O(h^{-2})$ \cite{davis1984methods}.
Setting $h = 1000$ yields a step size of $10^{-3}$ and an approximation
error of order $10^{-6}$, which is negligible relative to the estimation
uncertainty of $\hat{\theta}$ in all SMSN submodels considered here.

\begin{remark}[Three complementary summaries of the ROC curve]
\label{rem:three-summaries}
\normalfont
The Youden index, the AUC, and the optimal cutoff provide
three complementary perspectives on diagnostic performance,
each operating at a different level of aggregation.
The AUC integrates over the entire ROC curve and measures
global discrimination without committing to any operating
point.
The Youden index selects the single operating point where
sensitivity and specificity are jointly maximised, implicitly
assuming equal costs and prevalences.
The optimal cutoff (Section~\ref{sec:cutoff}) selects the
operating point that minimises the expected misclassification
cost for a specified cost-prevalence configuration, of which
the Youden solution is the symmetric special case.
In the SMSN framework, all three are estimated as smooth
functionals of the MLE $\hat\theta$, enabling unified
asymptotic inference via the delta method
(Theorem~\ref{thm:asymptotics}).
\end{remark}


\section{Optimal cutoff selection under misclassification costs}
\label{sec:cutoff}

\subsection{Decision-theoretic formulation}

Building on the notation introduced in Section~\ref{sec:setup},
consider the classification rule $\delta_c(x)=\mathbf{1}(x>c), c\in\mathcal C=[a,b]$, under which an individual is classified as seropositive when the biomarker exceeds the threshold $c$ and as seronegative otherwise.

Let $\lambda_1>0$ denote the cost of a false negative and
$\lambda_0>0$ the cost of a false positive, and let
$\pi_k=P(D=k)$, $k\in\{0,1\}$, denote the class prevalences.
The expected weighted misclassification risk associated with
threshold $c$ is
\begin{align}
R(c;\theta)
&=
\lambda_1\,P(\text{FN at }c)
+
\lambda_0\,P(\text{FP at }c)
\nonumber\\
&=
\lambda_1\pi_1 F_1(c;\theta_1)
+
\lambda_0\pi_0\bigl[1-F_0(c;\theta_0)\bigr].
\label{eq:risk}
\end{align}
The first term in~\eqref{eq:risk} represents the weighted false
negative probability, while the second represents the weighted
false positive probability. Although the prevalence factors may
be absorbed into the loss coefficients by defining
$\tilde\lambda_k=\lambda_k\pi_k$, we retain costs and
prevalences separately throughout for interpretability.

Within this framework, threshold selection is naturally
formulated as a risk minimisation problem. That is, among all
admissible thresholds in $\mathcal C$, the optimal cutoff is
defined as any value that minimises the expected weighted
misclassification risk:
\begin{equation}
\label{eq:cutoff-def}
c^*(\theta)\in\arg\min_{c\in\mathcal C} R(c;\theta).
\end{equation}

To characterise the minimiser in~\eqref{eq:cutoff-def}, we
differentiate the risk function~\eqref{eq:risk} with respect to the threshold and obtain an equivalent first-order condition expressed in terms of the likelihood ratio.

\subsection{Likelihood-ratio characterisation}

Assume that $F_0(\cdot;\theta_0)$ and $F_1(\cdot;\theta_1)$ are
differentiable on $(a,b)$, with corresponding densities
$f_0(\cdot;\theta_0)$ and $f_1(\cdot;\theta_1)$.
Then the derivative of the risk with respect to the threshold is
\begin{equation}
\label{eq:estimating-eq}
\varphi(c,\theta)
:=
\partial_c R(c;\theta)
=
\lambda_1\pi_1 f_1(c;\theta_1)
-
\lambda_0\pi_0 f_0(c;\theta_0).
\end{equation}
Accordingly, any interior minimiser $c^*(\theta)\in(a,b)$ must
satisfy the first-order condition
\[
\varphi(c^*(\theta),\theta)=0.
\]
Equivalently,
\begin{equation}
\label{eq:likelihood-ratio-condition}
\Lambda(c^*(\theta);\theta)
:=
\frac{f_1(c^*(\theta);\theta_1)}{f_0(c^*(\theta);\theta_0)}
=
\frac{\lambda_0\pi_0}{\lambda_1\pi_1}.
\end{equation}
Thus, the optimal cutoff is characterised by the point at which
the likelihood ratio equals the cost-prevalence ratio.

By equation~\eqref{eq:roc-slope}, the likelihood ratio
$\Lambda(c;\theta)$ is precisely the slope of the ROC curve at
the operating point associated with threshold $c$. Hence,
equation~\eqref{eq:likelihood-ratio-condition} states that the
optimal cutoff corresponds to the point on the
$\mathrm{ROC}$ curve whose tangent slope equals
$(\lambda_0\pi_0)/(\lambda_1\pi_1)$
\citep{metz1978basic,pepe2003statistical}. In contrast, the
Youden cutoff $c_Y(\theta)$, defined in
Section~\ref{subsec:youden}, satisfies
\[
\Lambda(c_Y(\theta);\theta)=1,
\]
and therefore coincides with $c^*(\theta)$ if and only if $\lambda_0\pi_0=\lambda_1\pi_1.$
That is, the Youden rule is recovered exactly in the symmetric
decision setting.

\begin{remark}[Geometric interpretation and relation to AUC]
\label{rem:auc-connection}
\normalfont
Equation~\eqref{eq:likelihood-ratio-condition} gives a geometric
interpretation of the optimal cutoff in ROC space. Since
$\Lambda(c;\theta)$ is the tangent slope of the ROC curve at
threshold $c$, the condition $\Lambda(c^*(\theta);\theta)=
(\lambda_0\pi_0)/(\lambda_1\pi_1)$ means that the optimal cutoff is the operating point at which
the ROC tangent slope matches the cost-prevalence ratio.
Under the boundary conditions of
Proposition~\ref{prop:optimal-cutoff}, this target slope is
attained at some point along the curve.

This also highlights the distinction between the AUC and the
optimal cutoff: the AUC summarises global discrimination across
all thresholds, whereas the optimal cutoff is a local quantity
determined by the ROC geometry at the operating point relevant
to the decision problem.
\end{remark}

The geometric interpretation above is informative, but the
decision-theoretic framework also requires a precise analytical
treatment of the optimisation problem. We therefore turn to the
basic theoretical questions of existence and characterisation of
the minimiser of the risk function.

\subsection{Existence and first-order optimality}

We first establish that the optimisation problem is well defined.

\begin{proposition}[\textbf{Existence of the optimal cutoff}]
\label{prop:existence}
Assume that, for each fixed $\theta\in\Theta$, the map
$c\mapsto R(c;\theta)$ is continuous on $\mathcal C$.
Then the set of minimisers
\[
\mathcal M(\theta):=\arg\min_{c \, \in\,\mathcal C} R(c;\theta)
\]
is nonempty.
\end{proposition}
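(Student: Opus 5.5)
The plan is to apply the extreme value theorem (Weierstrass) directly. The admissible set $\mathcal C=[a,b]$ is a closed and bounded interval of $\mathbb R$, hence compact by Heine--Borel. By hypothesis, for each fixed $\theta\in\Theta$ the map $c\mapsto R(c;\theta)$ is continuous on $\mathcal C$. A continuous real-valued function on a nonempty compact set attains its infimum, so there exists $c_0\in\mathcal C$ with $R(c_0;\theta)=\inf_{c\in\mathcal C}R(c;\theta)$. In particular $c_0\in\mathcal M(\theta)$, and $\mathcal M(\theta)\neq\emptyset$.

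For completeness I would also note that the infimum is finite. From~\eqref{eq:risk}, $0\le R(c;\theta)\le \lambda_1\pi_1+\lambda_0\pi_0$, since $F_0$ and $F_1$ take values in $[0,1]$, so the minimum value is a well-defined real number. It is also worth remarking that the continuity hypothesis holds automatically in the SMSN setting. Each $F_k(\cdot;\theta_k)$ is the CDF of an absolutely continuous distribution with density~\eqref{eq:smsn_density}, hence continuous, and $R$ is a linear combination of $F_1$ and $1-F_0$.

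As a further observation, $\mathcal M(\theta)$ is closed, being the preimage of the single minimum value under a continuous map. It is therefore a nonempty compact subset of $\mathcal C$; this fact may be useful later when uniqueness is addressed under the monotone likelihood ratio condition.

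I do not expect any real obstacle here. The only point needing care is that $\mathcal C$ is nonempty and compact, which holds by construction since $a\le b$ are finite. Without compactness (e.g.\ $\mathcal C=\mathbb R$) the minimiser could escape to $\pm\infty$. This is precisely why the paper restricts to a compact decision region.
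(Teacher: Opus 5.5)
Your proof is correct and follows the paper's argument: both apply the Weierstrass extreme value theorem to the continuous map $c\mapsto R(c;\theta)$ on the compact interval $\mathcal C=[a,b]$, which gives a point of $\mathcal M(\theta)$. Your extra remarks are accurate but not needed for the statement: the bound $0\le R\le\lambda_0\pi_0+\lambda_1\pi_1$, the automatic continuity of $R$ under SMSN models, and the closedness of $\mathcal M(\theta)$.
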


\begin{proof}
Fix $\theta\in\Theta$. Since $R(\cdot;\theta)$ is continuous on
the compact set $\mathcal C\subset\mathbb R$, the Weierstrass
extreme value theorem implies that $R(\cdot;\theta)$ attains
its minimum on $\mathcal C$. Therefore, there exists at least
one point $c^*(\theta)\in\mathcal C$ such that
\[
R(c^*(\theta);\theta)\le R(c;\theta),
\qquad \text{for all } c\in\mathcal C.
\]
Equivalently,
\[
c^*(\theta)\in\arg\min_{c\,\in\,\mathcal C}R(c;\theta)
=\mathcal M(\theta),
\]
and hence $\mathcal M(\theta)\neq\varnothing$.
\end{proof}

We next state the first-order necessary condition satisfied by
any interior global minimiser.

\begin{proposition}[\textbf{First-order necessary condition}]
\label{prop:first-order}
Assume that:
\begin{enumerate}
\item for each fixed $\theta\in\Theta$, the map
      $c\mapsto R(c;\theta)$ is differentiable on $(a,b)$, with
      derivative $\partial_c R(c;\theta)=\varphi(c,\theta)$ as
      defined in~\eqref{eq:estimating-eq};
\item $c^*(\theta)\in(a,b)$ is a global minimiser of
      $R(\cdot;\theta)$ over $\mathcal C=[a,b]$.
\end{enumerate}
Then
\[
\varphi(c^*(\theta),\theta)=0.
\]
Equivalently, $c^*(\theta)$ satisfies the
likelihood-ratio equation~\eqref{eq:likelihood-ratio-condition}.
\end{proposition}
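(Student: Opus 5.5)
This is Fermat's interior extremum theorem, applied at the interior point $c^*(\theta)$. Fix $\theta\in\Theta$ and write $c^*=c^*(\theta)\in(a,b)$. By hypothesis 2, $R(c;\theta)\ge R(c^*;\theta)$ for every $c\in\mathcal C$. Since $c^*$ is interior, there is $\eta>0$ with $(c^*-\eta,c^*+\eta)\subset(a,b)$, so this inequality holds in particular on a two-sided neighbourhood of $c^*$. A global minimum on $\mathcal C$ is therefore a local minimum in the open interval.

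Next I would look at the difference quotients $q(h)=[R(c^*+h;\theta)-R(c^*;\theta)]/h$ for $0<|h|<\eta$. For $h>0$ the numerator is nonnegative, so $q(h)\ge 0$. For $h<0$ the numerator is nonnegative and the denominator negative, so $q(h)\le 0$. By hypothesis 1, $R(\cdot;\theta)$ is differentiable at $c^*$, so both one-sided limits exist and equal $\partial_c R(c^*;\theta)=\varphi(c^*,\theta)$. Passing to the limit gives $\varphi(c^*,\theta)\ge0$ from the right and $\varphi(c^*,\theta)\le0$ from the left. Hence $\varphi(c^*,\theta)=0$.

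To obtain the likelihood-ratio form, I substitute the explicit expression \eqref{eq:estimating-eq}. This gives $\lambda_1\pi_1 f_1(c^*;\theta_1)=\lambda_0\pi_0 f_0(c^*;\theta_0)$. Dividing by $f_0(c^*;\theta_0)$ and by $\lambda_1\pi_1>0$ yields \eqref{eq:likelihood-ratio-condition}.

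The only step needing care is this division. It requires $f_0(c^*;\theta_0)>0$, which holds for the SMSN densities: by \eqref{eq:smsn_density} the integrand is strictly positive for every $z$, since $\phi>0$, $\Phi>0$ and $u>0$. Without positivity of $f_0$, the statement would have to be read as the cross-multiplied identity $\varphi=0$.
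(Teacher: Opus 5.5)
Your proposal is correct and follows the paper's proof, which likewise notes that an interior global minimiser is a local minimiser and applies Fermat's theorem to obtain $\varphi(c^*(\theta),\theta)=0$. You additionally unpack Fermat's theorem via one-sided difference quotients, and you make explicit the positivity $f_0(c^*;\theta_0)>0$ needed to pass to the likelihood-ratio form, a step the paper leaves implicit.
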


\begin{proof}
Since $c^*(\theta)\in(a,b)$ is a global minimiser over $[a,b]$,
it is in particular a local minimiser of the differentiable map
$c\mapsto R(c;\theta)$ on the open interval $(a,b)$. Fermat's
theorem therefore gives
\[
\partial_c R(c^*(\theta);\theta)=\varphi(c^*(\theta),\theta)=0,
\]
which is equivalent to
\[
\frac{f_1(c^*(\theta);\theta_1)}{f_0(c^*(\theta);\theta_0)}
=
\frac{\lambda_0\pi_0}{\lambda_1\pi_1}.
\]
\end{proof}

\begin{remark}
\normalfont
Proposition~\ref{prop:first-order} provides only a necessary
condition for optimality: every interior global minimiser must
satisfy the equation $\varphi(c,\theta)=0$. By itself, however,
this condition does not guarantee either existence or uniqueness
of a minimiser. These properties are established in
Proposition~\ref{prop:optimal-cutoff} under a monotone
likelihood ratio assumption.
\end{remark}

\subsection{Existence, uniqueness and global optimality under MLR}

The previous proposition identifies a necessary first-order
condition for optimality, but does not ensure that the equation
$\varphi(c,\theta)=0$ has a unique solution or that such a
solution is globally optimal. To obtain these conclusions, we now impose a monotone likelihood ratio condition.

\subsubsection*{Local regularity and differentiability}

\begin{theorem}[\textbf{Local regularity of the optimal cutoff}]
\label{thm:regularity}
Let $c_0\in(a,b)$ and $\theta^*\in\Theta$ be fixed.
Suppose that:
\begin{enumerate}
    \item $\varphi(c_0,\theta^*)=0$;

    \item the map $(c,\theta)\mapsto\varphi(c,\theta)$ is
          continuously differentiable in a neighbourhood of
          $(c_0,\theta^*)$;

    \item $\partial_c\varphi(c_0,\theta^*)\neq 0$.
\end{enumerate}
Then there exist an open neighbourhood
$\mathcal U\subseteq\Theta$ of $\theta^*$ and a unique
continuously differentiable function
\[
c^*:\mathcal U\longrightarrow(a,b)
\]
such that
\[
\varphi\!\left(c^*(\theta),\theta\right)=0
\quad\text{for all }\theta\in\mathcal U,
\qquad
c^*(\theta^*)=c_0.
\]
Moreover, for each $\theta\in\mathcal U$,
\[
\nabla_\theta c^*(\theta)
=
-
\frac{\nabla_\theta \varphi\!\left(c^*(\theta),\theta\right)}
{\partial_c\varphi\!\left(c^*(\theta),\theta\right)}
\;\in\mathbb R^p.
\]
\end{theorem}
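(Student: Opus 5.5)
The plan is to apply the classical implicit function theorem directly to the scalar equation $\varphi(c,\theta)=0$. We regard $\varphi$ as a map from an open subset of $\mathbb R\times\mathbb R^p$ to $\mathbb R$, defined near $(c_0,\theta^*)$. Hypothesis~2 gives the required $C^1$ regularity on an open neighbourhood $\mathcal W$ of $(c_0,\theta^*)$. Shrinking $\mathcal W$ if necessary, we may take it of product form $(c_0-\varepsilon,c_0+\varepsilon)\times\mathcal V$ with $(c_0-\varepsilon,c_0+\varepsilon)\subset(a,b)$ and $\mathcal V\subseteq\Theta$ open. This uses $c_0\in(a,b)$ and the fact that $\Theta$ is open in the unconstrained parametrisation~\eqref{eq:parametrisation}. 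Hypothesis~1 supplies the base point on the zero set, and Hypothesis~3 says that the partial Jacobian in $c$ (here a $1\times1$ matrix) is invertible.

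Invoking the implicit function theorem, we obtain open sets $\mathcal U\subseteq\mathcal V$ containing $\theta^*$ and $I\subseteq(c_0-\varepsilon,c_0+\varepsilon)$ containing $c_0$, and a unique function $c^*:\mathcal U\to I\subset(a,b)$, such that for $(c,\theta)\in I\times\mathcal U$ we have $\varphi(c,\theta)=0$ if and only if $c=c^*(\theta)$. This function is $C^1$ and satisfies $c^*(\theta^*)=c_0$. Uniqueness should be stated in this local sense: it holds among functions taking values in $I$, or equivalently among continuous solutions with $c^*(\theta^*)=c_0$. This is the reading of ``unique'' in the statement, since global uniqueness on $(a,b)$ is deferred to the MLR result of Proposition~\ref{prop:optimal-cutoff}.

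For the gradient formula, we first note that $\partial_c\varphi(c,\theta)$ is continuous and nonzero at $(c_0,\theta^*)$. By shrinking $\mathcal U$ and $I$ further, it stays nonzero along the graph $\{(c^*(\theta),\theta):\theta\in\mathcal U\}$. Next we differentiate the identity $\varphi(c^*(\theta),\theta)\equiv0$ on $\mathcal U$ using the chain rule, which is valid because both $\varphi$ and $c^*$ are $C^1$:
\[
\partial_c\varphi\bigl(c^*(\theta),\theta\bigr)\,\nabla_\theta c^*(\theta)+\nabla_\theta\varphi\bigl(c^*(\theta),\theta\bigr)=0 .
\]
Dividing by the nonvanishing scalar $\partial_c\varphi$ gives the stated expression for $\nabla_\theta c^*(\theta)$.

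The main point needing care is the bookkeeping of neighbourhoods rather than any analytic difficulty. Three things must hold simultaneously on $\mathcal U$: the image of $c^*$ stays inside $(a,b)$, $\partial_c\varphi$ does not vanish along the graph, and the neighbourhood lies within the region where $\varphi$ is $C^1$. Each is handled by intersecting finitely many open sets around the base point, so I expect no genuine obstacle.

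As a side remark, for the SMSN models one would check Hypothesis~2 by noting that $\varphi=\lambda_1\pi_1f_1-\lambda_0\pi_0f_0$. The densities~\eqref{eq:smsn_density} are smooth in $(c,\theta)$, which follows by differentiating under the integral sign with dominated convergence for the SN and ST mixing laws. However, the theorem as stated assumes this regularity, so that check is not needed here.
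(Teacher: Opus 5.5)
Your proposal is correct and follows the paper's proof essentially step for step: you apply the implicit function theorem to $\varphi$ on a neighbourhood inside $(a,b)\times\Theta$, shrink so that $c^*$ takes values in $(a,b)$ and $\partial_c\varphi\neq 0$ along the graph, and then differentiate $\varphi(c^*(\theta),\theta)\equiv 0$ by the chain rule. Your remark that uniqueness holds only in the local sense is a sound clarification that the paper leaves implicit: it means uniqueness among maps into $I$, or among continuous branches through $(c_0,\theta^*)$ provided $\mathcal U$ is taken connected.
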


\begin{proof}
Assumptions~(1)--(3) allow us to apply the implicit function theorem
to the map
\[
\varphi:\mathcal V\subset \mathbb R\times\Theta\to\mathbb R,
\]
defined on an open neighbourhood $\mathcal V$ of $(c_0,\theta^*)$.
(see, e.g., \citealt[Theorem~9.28]{rudin1976principles}).
Therefore, there exist an open neighbourhood
$\mathcal U_0\subseteq\Theta$ of $\theta^*$, an open interval
$\mathcal W_0\subset \mathbb R$ containing $c_0$, and a unique
$C^1$ function
\[
\tilde c:\mathcal U_0\to\mathcal W_0
\]
such that
\[
\varphi(\tilde c(\theta),\theta)=0
\quad\text{for all }\theta\in\mathcal U_0,
\qquad
\tilde c(\theta^*)=c_0.
\]

\noindent
Since $c_0\in(a,b)$ and $(a,b)$ is an open interval, we may,
after possibly shrinking $\mathcal W_0$, assume that
\[
\mathcal W_0\subset(a,b).
\]
Define
\[
\mathcal U
:=
\{\theta\in\mathcal U_0:\tilde c(\theta)\in(a,b)\}.
\]
Because $\tilde c$ is continuous and $(a,b)$ is open, the set
$\mathcal U$ is an open neighbourhood of $\theta^*$.
Restricting $\tilde c$ to $\mathcal U$, we obtain a unique
$C^1$ function
\[
c^*:=\tilde c|_{\mathcal U}:\mathcal U\to(a,b)
\]
such that
\[
\varphi(c^*(\theta),\theta)=0
\quad\text{for all }\theta\in\mathcal U,
\qquad
c^*(\theta^*)=c_0.
\]

\noindent
Differentiating the identity
\[
\varphi(c^*(\theta),\theta)=0
\]
with respect to $\theta\in\Theta\subset\mathbb R^p$ and applying the
chain rule gives
\[
\partial_c\varphi(c^*(\theta),\theta)\,
\nabla_\theta c^*(\theta)
+
\nabla_\theta\varphi(c^*(\theta),\theta)
=
\mathbf 0
\in\mathbb R^p.
\]
By continuity of $\partial_c\varphi$ and the fact that
\[
\partial_c\varphi(c_0,\theta^*)\neq 0,
\]
after possibly shrinking $\mathcal U$ we may ensure that
\[
\partial_c\varphi(c^*(\theta),\theta)\neq 0
\qquad\text{for all }\theta\in\mathcal U.
\]
Therefore,
\[
\nabla_\theta c^*(\theta)
=
-
\frac{\nabla_\theta\varphi(c^*(\theta),\theta)}
{\partial_c\varphi(c^*(\theta),\theta)},
\]
which proves the result.
\end{proof}
\begin{remark}
\normalfont
Theorem~\ref{thm:regularity} is purely local in nature: it
asserts the existence of a unique continuously differentiable
branch of solutions to the equation $\varphi(c,\theta)=0$ in a
neighbourhood of $(c_0,\theta^*)$. By itself, it does not imply
that this branch corresponds to a global minimiser of the risk
function. In the present setting, that interpretation is
provided by Proposition~\ref{prop:optimal-cutoff}, which
establishes existence, uniqueness, and global optimality of the
cutoff under a monotone likelihood ratio condition. Thus, when
Proposition~\ref{prop:optimal-cutoff} applies, the point $c_0$
may be taken as the unique global minimiser at $\theta^*$, and
Theorem~\ref{thm:regularity} shows that this optimal cutoff
depends smoothly on the parameter.
\end{remark}

\subsubsection*{From local optimality to global structure}

The previous results show that any interior global minimiser
must satisfy $\varphi(c,\theta)=0$, and that this minimiser
is a locally smooth function of the parameter $\theta$.
However, neither result guarantees that the equation
$\varphi(c,\theta)=0$ has a unique solution, since
$\varphi$ may have multiple sign changes when the likelihood
ratio is not monotone.

A natural sufficient condition for global uniqueness is the
monotone likelihood ratio (MLR) property.
Under MLR, the map $c\mapsto\varphi(c,\theta)$ is
strictly monotone, so the first-order equation admits exactly
one solution, which must then be the global minimiser.
The following proposition makes this precise.

\begin{proposition}[\textbf{Existence, uniqueness and global optimality under MLR}]
\label{prop:optimal-cutoff}
Assume that:
\begin{enumerate}
    \item the densities $f_0(\cdot\,;\theta_0)$ and
          $f_1(\cdot\,;\theta_1)$ are continuous and strictly
          positive on $(a,b)$;

    \item the likelihood ratio
          \[
          \Lambda(c;\theta)
          =
          \frac{f_1(c;\theta_1)}{f_0(c;\theta_0)}
          \]
          is continuous and strictly increasing on $(a,b)$;

    \item the boundary conditions
          \[
          \lim_{c\,\downarrow\,a}\Lambda(c;\theta)
          <
          \frac{\lambda_0\pi_0}{\lambda_1\pi_1}
          \qquad\text{and}\qquad
          \lim_{c\,\uparrow\,b}\Lambda(c;\theta)
          >
          \frac{\lambda_0\pi_0}{\lambda_1\pi_1}
          \]
          hold.
\end{enumerate}
Then there exists a unique $c^*(\theta)\in(a,b)$ satisfying
\[
\Lambda(c^*(\theta);\theta)
=
\frac{\lambda_0\pi_0}{\lambda_1\pi_1}.
\]
Moreover, $c^*(\theta)$ is the unique global minimiser of
$R(c;\theta)$ over $[a,b]$.
\end{proposition}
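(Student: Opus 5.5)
The plan is to reduce everything to the sign of the derivative $\varphi(c,\theta)=\partial_c R(c;\theta)$ and then use the intermediate value theorem together with strict monotonicity. Set $\kappa:=\lambda_0\pi_0/(\lambda_1\pi_1)>0$. Because $f_0>0$ on $(a,b)$ (assumption~1), we can factor
\[
\varphi(c,\theta)=\lambda_1\pi_1 f_0(c;\theta_0)\bigl[\Lambda(c;\theta)-\kappa\bigr],
\qquad c\in(a,b).
\]
The prefactor is strictly positive, so on $(a,b)$ the function $\varphi(\cdot,\theta)$ has the same sign as $g(c):=\Lambda(c;\theta)-\kappa$.

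\emph{Existence and uniqueness of the root.} By assumption~2, $g$ is continuous and strictly increasing on $(a,b)$, so its one-sided limits at $a$ and $b$ exist in the extended reals. By assumption~3, $\lim_{c\downarrow a}g(c)<0<\lim_{c\uparrow b}g(c)$. The definition of the limit then gives points $a<c_1<c_2<b$ with $g(c_1)<0<g(c_2)$. The intermediate value theorem on $[c_1,c_2]$ yields a root $c^*(\theta)\in(a,b)$, which is precisely the likelihood-ratio equation~\eqref{eq:likelihood-ratio-condition}. Strict monotonicity of $g$ rules out a second root.

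\emph{Global optimality.} By strict monotonicity, $\varphi(c,\theta)<0$ for $c\in(a,c^*)$ and $\varphi(c,\theta)>0$ for $c\in(c^*,b)$. The densities are continuous on $(a,b)$, so $R(\cdot;\theta)$ is $C^1$ there with derivative $\varphi$. The mean value theorem then shows that $R$ is strictly decreasing on $(a,c^*]$ and strictly increasing on $[c^*,b)$. Hence $R(c)>R(c^*)$ for every $c\in(a,b)\setminus\{c^*\}$.

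The main obstacle is the pair of endpoints $c=a$ and $c=b$, which assumptions 1--3 only control through limits. Here I would use that $R$ is a combination of CDFs and therefore continuous on all of $[a,b]$. For the left endpoint, pick any $c_1\in(a,c^*)$. Then $R(a)=\lim_{c\downarrow a}R(c)\ge R(c_1)>R(c^*)$, since $R$ is strictly decreasing on $(a,c^*]$. The right endpoint is handled symmetrically with a point $c_2\in(c^*,b)$.

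This establishes that $c^*(\theta)$ is the unique global minimiser over $[a,b]$. It is consistent with Proposition~\ref{prop:existence}: the nonempty set $\mathcal M(\theta)$ is the singleton $\{c^*(\theta)\}$. It is also consistent with Proposition~\ref{prop:first-order}, since the minimiser is interior and satisfies $\varphi=0$.
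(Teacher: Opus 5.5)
Your proposal is correct and follows the same route as the paper. You factor $\varphi(c,\theta)$ through the positive quantity $f_0(c;\theta_0)$, get the root from the intermediate value theorem and its uniqueness from strict monotonicity of $\Lambda(\cdot;\theta)$, and then use the sign of $\partial_c R$ to show that $R(\cdot;\theta)$ is strictly decreasing before $c^*(\theta)$ and strictly increasing after it. The only differences are cosmetic: you apply the intermediate value theorem on an interior compact subinterval and treat $a$ and $b$ explicitly through continuity of $R$ on $[a,b]$, whereas the paper applies the fundamental theorem of calculus directly on $[a,c^*(\theta)]$ and $[c^*(\theta),b]$, so your handling of the endpoints is if anything slightly more careful.
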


\begin{proof}
\textit{Existence.}
Define
\[
h(c;\theta)
=
\frac{\lambda_1\pi_1}{\lambda_0\pi_0}\,\Lambda(c;\theta)-1,
\qquad c\in(a,b).
\]
Since $f_0(c;\theta_0)>0$ on $(a,b)$ by assumption~(1),
we may factor
\begin{equation}\label{eq:derivada}
\partial_c R(c;\theta)=\varphi(c,\theta)
=
\lambda_0\pi_0\,f_0(c;\theta_0)\,h(c;\theta),   
\end{equation}
so $\varphi$ and $h$ share the same sign on $(a,b)$.
The function $h(\cdot\,;\theta)$ is continuous by
assumption~(2), and the boundary conditions~(3) imply
\[
\lim_{c\,\downarrow\, a}h(c;\theta)<0,
\qquad
\lim_{c\,\uparrow\, b}h(c;\theta)>0.
\]
By the intermediate value theorem, there exists at least one
$c^*(\theta)\in(a,b)$ such that $h(c^*(\theta);\theta)=0$, equivalently
$\Lambda(c^*(\theta);\theta)=(\lambda_0\pi_0)/(\lambda_1\pi_1)$.

\medskip
\noindent
\textit{Uniqueness.}
Since $\Lambda(\cdot;\theta)$ is strictly increasing by
assumption~(2), so is $h(\cdot\,;\theta)$.
Therefore $h(\cdot;\theta)=0$ has at most one solution in
$(a,b)$, which establishes uniqueness.

\medskip
\noindent
\textit{Global optimality.}
Since $F_0(\cdot\,;\theta_0)$ and $F_1(\cdot\,;\theta_1)$ admit continuous
densities $f_0(\cdot\,;\theta_0)$ and $f_1(\cdot\,;\theta_1)$ on $(a,b)$,
the risk function $R(\cdot\,;\theta)$ is continuously differentiable on
$(a,b)$, with derivative as in \eqref{eq:derivada}. Since $h(\cdot;\theta)$ is strictly increasing and
$h(c^*(\theta);\theta)=0$, it follows that
\[
h(c;\theta)<0 \quad \text{for } c<c^*(\theta),
\qquad
h(c;\theta)>0 \quad \text{for } c>c^*(\theta).
\]
Because
\[
\partial_c R(c;\theta)
=
\lambda_0\pi_0 f_0(c;\theta_0)\,h(c;\theta),
\]
with $\lambda_0\pi_0>0$ and $f_0(c;\theta_0)>0$ on $(a,b)$,
we obtain
\[
\partial_c R(c;\theta)<0 \quad \text{for } c\in(a,c^*(\theta)),
\qquad
\partial_c R(c;\theta)>0 \quad \text{for } c\in(c^*(\theta),b).
\]

\noindent
Now let $a\le c_1<c_2\le c^*(\theta)$. Since $R(\cdot;\theta)$ is
continuous on $[c_1,c_2]$ and differentiable on $(c_1,c_2)$, the
fundamental theorem of calculus yields
\[
R(c_2;\theta)-R(c_1;\theta)
=
\int_{c_1}^{c_2}\partial_c R(t;\theta)\,dt.
\]
As $\partial_c R(t;\theta)<0$ for all $t\in(c_1,c_2)$, it follows that
\[
R(c_2;\theta)<R(c_1;\theta).
\]
Thus $R(\cdot;\theta)$ is strictly decreasing on $[a,c^*(\theta)]$. Similarly, for any $c^*(\theta)\le c_1<c_2\le b$,
\[
R(c_2;\theta)-R(c_1;\theta)
=
\int_{c_1}^{c_2}\partial_c R(t;\theta)\,dt>0,
\]
because $\partial_c R(t;\theta)>0$ for all $t\in(c_1,c_2)$.
Hence $R(\cdot;\theta)$ is strictly increasing on $[c^*(\theta),b]$.

\noindent
Therefore,
\[
R(c;\theta)>R(c^*(\theta);\theta)
\qquad\text{for all } c\in[a,b],\; c\neq c^*(\theta),
\]
so $c^*(\theta)$ is the unique global minimiser of $R(\cdot;\theta)$ on $[a,b]$.
\end{proof}

\begin{remark}[Plausibility of the MLR assumption in SMSN models]
\label{rem:mlr-sm-sn}
\normalfont
The monotone likelihood ratio assumption provides a convenient
sufficient condition for existence, uniqueness, and global
optimality of the cutoff. While this condition holds exactly in
classical location-shift models (for example, Gaussian models
with common variance), it need not hold for the full SMSN
family, where the two populations may differ in scale,
skewness, and tail behaviour.

Nevertheless, in many practical applications, including
serological studies, the two groups differ mainly in location,
with only moderate differences in dispersion and asymmetry.
Under such configurations, the likelihood ratio
$\Lambda(c;\theta)$ may be approximately monotone over the
region of overlap between the two class densities, which often
leads to a unique and well-defined solution of the optimality
equation in practice.

More generally, the MLR condition is sufficient but not
necessary for uniqueness. The optimal cutoff remains uniquely
defined whenever the equation $\varphi(c,\theta)=0$ has a
single solution, even if $\Lambda(c;\theta)$ is not globally
monotone. In practice, this may be assessed by inspecting the
sign-change structure of the map $c\mapsto\varphi(c,\theta)$.

Thus, global structural assumptions such as MLR provide a
convenient theoretical route to uniqueness, while the numerical
location and stability of the cutoff are governed by the local
behaviour of the likelihood ratio near the decision boundary.
\end{remark}

The optimality condition also admits a natural geometric
interpretation in ROC space, which clarifies the role of the
likelihood ratio in threshold selection.

\begin{remark}[ROC interpretation of the optimality condition]
\label{rem:roc-geometry}
\normalfont
The likelihood ratio $\Lambda(c;\theta)$ admits a natural
geometric interpretation in ROC space. Indeed, parametrising
the ROC curve by the threshold $c$ as
\[
c \mapsto \bigl(\mathrm{FPR}(c), \mathrm{TPR}(c)\bigr),
\]
one obtains, whenever the derivatives exist and
$f_0(c;\theta_0)>0$,
\[
\frac{d\,\mathrm{TPR}(c)}{d\,\mathrm{FPR}(c)}
=
\Lambda(c;\theta),
\]
so that the likelihood ratio coincides with the tangent slope
of the ROC curve at the operating point associated with
threshold $c$.

\noindent
Under this representation, the optimality condition
\[
\Lambda(c^*(\theta);\theta)
=
\frac{\lambda_0\pi_0}{\lambda_1\pi_1}
\]
states that the optimal cutoff is the point on the ROC curve
whose tangent slope equals the ratio of weighted costs and
prevalences. The boundary conditions ensure that the ROC slope
passes from values below to values above this target level,
thereby guaranteeing the existence of such a point. Under the
MLR assumption, this point is unique and corresponds to the
global optimum.

This geometric viewpoint links decision-theoretic optimality to
ROC analysis and clarifies how optimal thresholds arise as
tangency points determined by prevalence and cost asymmetries.
\end{remark}

\subsection{Relationship with the Youden cutoff}
\label{subsec:youden-relationship}

Having characterised $c^*(\theta)$ through the
likelihood-ratio condition~\eqref{eq:likelihood-ratio-condition},
we now study its departure from the Youden cutoff
$c_Y(\theta)$ introduced in Section~\ref{subsec:youden}.
The next result provides a first-order approximation to this
displacement in terms of the cost-prevalence asymmetry.

\begin{proposition}[\textbf{Relationship with the Youden cutoff}]
\label{prop:youden-relationship}

Let $c_Y(\theta)$ and $c^*(\theta)$ be the Youden and optimal
cutoffs defined in~\eqref{eq:youden-cutoff}
and~\eqref{eq:cutoff-def}, respectively.
Define the log-likelihood ratio
\[
g(c;\theta) = \log \Lambda(c;\theta)
= \log f_1(c;\theta_1) - \log f_0(c;\theta_0).
\]
Assume further that the MLR condition
of Proposition~\ref{prop:optimal-cutoff} holds, so that
$\Lambda(c;\theta)$ is strictly increasing in $c$, 
and that $g(c;\theta)$ is continuously differentiable in a
neighbourhood of $c_Y(\theta)$ with
$\partial_c g(c_Y(\theta);\theta) \neq 0$.

\noindent
Then:
\begin{enumerate}
  \item If $\lambda_0\pi_0 = \lambda_1\pi_1$, then
        $c^*(\theta) = c_Y(\theta)$.

  \item Otherwise, as
        $\log(\lambda_0\pi_0/\lambda_1\pi_1) \to 0$,
        \begin{equation}
        \label{eq:youden-expansion}
        c^*(\theta) - c_Y(\theta)
        =
        \frac{
          \log\!\left(\dfrac{\lambda_0\pi_0}{\lambda_1\pi_1}\right)
        }{
          \partial_c g(c_Y(\theta);\theta)
        }
        +
        o\!\left(
          \log \left(\!\frac{\lambda_0\pi_0}{\lambda_1\pi_1}\right)
        \right),
        \end{equation}
        where
        \begin{equation}
\label{eq:dcg-formula}
\partial_c g(c_Y(\theta);\theta)
= \frac{
    \partial_c f_1(c_Y(\theta);\theta_1)
    - \partial_c f_0(c_Y(\theta);\theta_0)
  }{
    f_1(c_Y(\theta);\theta_1)
  },
\end{equation}
where $f_1(c_Y;\theta_1) = f_0(c_Y;\theta_0)$ implies that the
denominator may be equivalently written as $f_0(c_Y(\theta);\theta_0)$.
\end{enumerate}
\end{proposition}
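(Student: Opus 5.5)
The plan is to recast both cutoffs as roots of a single one-parameter family of equations in the log-likelihood ratio and then apply the one-dimensional implicit function theorem, in the same spirit as Theorem~\ref{thm:regularity}. Set $\kappa=\log(\lambda_0\pi_0/\lambda_1\pi_1)$. Under the hypotheses of Proposition~\ref{prop:optimal-cutoff}, $c^*(\theta)$ is the unique point of $(a,b)$ with $g(c^*(\theta);\theta)=\kappa$. We also need $c_Y(\theta)$ to be the unique point with $g(c_Y(\theta);\theta)=0$. For this, apply Proposition~\ref{prop:optimal-cutoff} with $\lambda_0\pi_0=\lambda_1\pi_1$: then $R(c;\theta)$ is proportional to $1-J(c;\theta)$, so the minimiser of $R$ is the maximiser of $J$. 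This requires the boundary conditions of Proposition~\ref{prop:optimal-cutoff} to hold at the target level $1$, which I will state as implicitly assumed, or obtain from monotonicity when the level-$\kappa$ conditions bracket $1$. We now have $c_Y=g(\cdot;\theta)^{-1}(0)$ and $c^*=g(\cdot;\theta)^{-1}(\kappa)$, where $g(\cdot;\theta)$ is strictly increasing and continuous on $(a,b)$.

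Part~1 follows at once. If $\kappa=0$, both cutoffs solve $g(c;\theta)=0$, and uniqueness in Proposition~\ref{prop:optimal-cutoff} gives $c^*(\theta)=c_Y(\theta)$.

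For Part~2, consider $G(c,\kappa)=g(c;\theta)-\kappa$ near $(c_Y(\theta),0)$. $G$ is $C^1$ in a neighbourhood of this point, $G(c_Y,0)=0$, and $\partial_c G=\partial_c g(c_Y;\theta)\neq0$. The implicit function theorem (\citealt[Theorem~9.28]{rudin1976principles}) therefore gives a $C^1$ function $\kappa\mapsto c(\kappa)$ with $c(0)=c_Y$, $g(c(\kappa);\theta)=\kappa$, and $c'(0)=1/\partial_c g(c_Y;\theta)$. Since $\kappa\mapsto c(\kappa)$ stays in $(a,b)$ for small $|\kappa|$, uniqueness from strict monotonicity identifies $c(\kappa)$ with $c^*(\theta)$. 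A first-order Taylor expansion, $c(\kappa)=c(0)+c'(0)\kappa+o(\kappa)$, then yields \eqref{eq:youden-expansion}.

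An alternative that avoids the implicit function theorem is to write $\kappa=g(c^*)-g(c_Y)=\partial_c g(\tilde c;\theta)(c^*-c_Y)$ by the mean value theorem. Continuity of the inverse of a strictly monotone continuous function gives $c^*\to c_Y$ as $\kappa\to0$, so $\tilde c\to c_Y$, and continuity of $\partial_c g$ gives the same conclusion. This route is cleaner, and I would probably present it.

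Formula \eqref{eq:dcg-formula} comes from $\partial_c g=\partial_c f_1/f_1-\partial_c f_0/f_0$ evaluated at $c_Y$, where $f_1=f_0$, so the two terms combine over the common denominator.

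The main obstacle is the point just addressed: justifying that $c_Y(\theta)$ really is the interior zero of $g$, and that $c^*(\theta)$ stays in the neighbourhood where $g$ is $C^1$ with nonvanishing derivative as $\kappa\to0$. This is a boundary-condition and localisation issue rather than a computational one. I will handle it via the continuity of $g(\cdot;\theta)^{-1}$ on the open interval $g((a,b);\theta)$, which contains $0$.
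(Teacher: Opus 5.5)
Your proposal is correct and follows essentially the paper's argument. The paper also uses $g(c_Y(\theta);\theta)=0$ and $g(c^*(\theta);\theta)=\log(\lambda_0\pi_0/\lambda_1\pi_1)$, linearises $g(\cdot;\theta)$ at $c_Y(\theta)$ by a first-order Taylor expansion, and solves using $\partial_c g(c_Y(\theta);\theta)\neq 0$, which is the same step as your mean-value (or implicit-function) inversion; your additional checks that $c_Y(\theta)$ is the interior zero of $g$ and that $c^*(\theta)\to c_Y(\theta)$ as $\kappa\to 0$ (by continuity of $g(\cdot;\theta)^{-1}$ on $g((a,b);\theta)\ni 0$) are what the paper leaves implicit when it turns the $o(c^*-c_Y)$ remainder into $o(\kappa)$, so your version is, if anything, more complete.
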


\begin{proof}
By condition~\eqref{eq:youden-condition}, $g(c_Y(\theta);\theta) = \log (1) = 0$. 
By condition~\eqref{eq:likelihood-ratio-condition},
$g(c^*(\theta);\theta) = \log(\lambda_0\pi_0/\lambda_1\pi_1)$.
A first-order Taylor expansion of $g(\cdot;\theta)$
around $c_Y(\theta)$ gives
\[
\log \left(\!\frac{\lambda_0\pi_0}{\lambda_1\pi_1}\right)
= \partial_c g(c_Y(\theta);\theta)
  \bigl(c^*(\theta) - c_Y(\theta)\bigr)
  + o\bigl(c^*(\theta) - c_Y(\theta)\bigr).
\]
Since $\partial_c g(c_Y(\theta);\theta) \neq 0$ by assumption,
solving for $c^*(\theta) - c_Y(\theta)$ yields~\eqref{eq:youden-expansion}.
Formula~\eqref{eq:dcg-formula} follows by differentiating
$g(c;\theta)$ and evaluating at $c = c_Y(\theta)$, where
$f_1(c_Y;\theta_1) = f_0(c_Y;\theta_0)$.
\end{proof}

\begin{remark}[Interpretation and local sensitivity]
\label{rem:youden-interpretation}
\normalfont
The expansion~\eqref{eq:youden-expansion} shows that the
displacement $\Delta_c(\theta)=c^*(\theta)-c_Y(\theta)$ is
governed by two components. The numerator,
$\log(\lambda_0\pi_0/\lambda_1\pi_1)$, measures the degree of
asymmetry in the decision problem: it vanishes under joint
symmetry of costs and prevalences, and its sign determines the
direction of the shift from the Youden cutoff.

The denominator, $\partial_c g(c_Y(\theta);\theta)$, measures
the local steepness of the log-likelihood ratio at the Youden
point. When this quantity is large in magnitude, the
likelihood ratio changes rapidly near $c_Y(\theta)$, so the
optimal cutoff remains close to the Youden cutoff even under
moderate asymmetry. Conversely, when it is small in magnitude,
the likelihood ratio is locally flat, and small departures from
the symmetric setting may produce substantial shifts in the
optimal threshold.
\end{remark}

The previous results characterise the optimal cutoff at the
population level, assuming that the model parameter $\theta$ is
known. In practice, however, $\theta$ must be estimated from the
data. This naturally leads to a plug-in estimator of the
optimal cutoff obtained by replacing $\theta$ with a suitable
estimator $\hat\theta$ in the defining optimality equation.

\subsection{Plug-in estimation of the optimal cutoff}

In practice, the parameter vector $\theta^*$ is unknown and must
be estimated from the data. Let
\[
\hat\theta=(\hat\theta_0^\top,\hat\theta_1^\top)^\top
\]
denote an estimator of $\theta^*$, for example the maximum
likelihood estimator under the SMSN model.

Motivated by the likelihood-ratio characterisation of the
optimal cutoff, we define the plug-in estimator $\hat c$ as a
solution of the sample analogue of the estimating equation,
\begin{equation*}
\label{eq:plugin}
\varphi(\hat c,\hat\theta)=0,
\end{equation*}
that is,
\begin{equation}
\label{eq:estimated-eq}
\lambda_1\pi_1 f_1(\hat c;\hat\theta_1)
-
\lambda_0\pi_0 f_0(\hat c;\hat\theta_0)
=0.
\end{equation}

\noindent
At the population level, the optimal cutoff
$c^*=c^*(\theta^*)$ satisfies
\[
\varphi(c^*,\theta^*)=0.
\]
Assume further that this root is non-degenerate, in the sense
that
\[
\partial_c\varphi(c^*,\theta^*)\neq 0.
\]
Then, by Theorem~\ref{thm:regularity}, there exists an open
neighbourhood $\mathcal U$ of $\theta^*$ and a unique
continuously differentiable function
\[
\theta\mapsto c^*(\theta)
\]
defined on $\mathcal U$ such that
\[
\varphi(c^*(\theta),\theta)=0
\qquad\text{for all }\theta\in\mathcal U,
\qquad
c^*(\theta^*)=c^*.
\]

\noindent
Consequently, whenever $\hat\theta\in\mathcal U$, the plug-in
estimator may be written equivalently as
\[
\hat c=c^*(\hat\theta).
\]
This representation is fundamental for asymptotic analysis,
since it allows the large-sample behaviour of $\hat c$ to be
derived from that of $\hat\theta$ through the implicit function
theorem and the multivariate delta method.

In practice, equation~\eqref{eq:estimated-eq} will generally be
solved numerically. The regularity assumptions introduced below
ensure that the resulting estimator is well defined locally and
inherits the smooth structure of the population cutoff.

To study the large-sample behaviour of $\hat c$, we now
formalise the assumptions required on the parametric model, the
estimating equation, and the estimator $\hat\theta$.

\subsection{Regularity conditions for asymptotic inference}

We now study the asymptotic behaviour of the plug-in estimator
$\hat c$. Suppose that two independent samples are observed,
\[
X_{01},\ldots,X_{0n_0}\stackrel{\mathrm{i.i.d.}}{\sim}
f_0(\cdot;\theta_0^*),
\qquad
X_{11},\ldots,X_{1n_1}\stackrel{\mathrm{i.i.d.}}{\sim}
f_1(\cdot;\theta_1^*),
\]
where
\[
\theta^*=(\theta_0^{*\top},\theta_1^{*\top})^\top,
\qquad
n=n_0+n_1\to\infty.
\]

\noindent
We impose the following regularity conditions.

\begin{itemize}
\item[(i)] For each $k\in\{0,1\}$, the parameter space
$\Theta_k\subset\mathbb R^{p_k}$ is open, and
$\theta_k^*$ lies in its interior.

\item[(ii)] For each $k\in\{0,1\}$, the model is identifiable, in the sense
that
\[
f_k(\cdot;\theta_k)=f_k(\cdot;\theta_k')
\quad\text{almost everywhere}
\quad\Longrightarrow\quad
\theta_k=\theta_k'.
\]
This assumption is satisfied for the standard univariate SMSN
submodels considered in this paper, in particular, the
skew-normal and skew-$t$ distributions, provided that the true parameter lies in the interior of the parameter space and within a regular region where the model is identifiable. In this framework, the score vector and the Hessian matrix are well-defined, and the Fisher information matrix is non-singular. In particular, we exclude degenerate or near-boundary configurations, such as vanishing scale and, for the skew-$t$ model, values of $\nu$ arbitrarily close to $2$.

\item[(iii)] For each $k\in\{0,1\}$ and each fixed $x$, the map $\theta_k\mapsto f_k(x;\theta_k)$ is twice continuously differentiable in a neighbourhood of $\theta_k^*$. Moreover, we assume the usual regularity conditions under which derivatives with respect to the parameter may be taken under the integral sign, the score has mean zero, the Fisher information is well-defined and non-singular, and the standard asymptotic theory for the maximum likelihood estimators applies.

\item[(iv)] The estimator $\hat\theta$ satisfies
\begin{equation}
\label{eq:theta-an}
\sqrt{n}\bigl(\hat\theta-\theta^*\bigr)
\stackrel{d}{\longrightarrow}
\mathcal N(0,\Sigma),
\end{equation}
for some positive definite matrix $\Sigma$.
In the present setting, this condition holds for the maximum likelihood estimator under the usual regularity assumptions for regular univariate skew-normal and skew-$t$ models, namely: identifiability, interior true parameter, sufficient smoothness of the log-likelihood, validity of differentiation under the integral sign, and non-singularity of the Fisher information; see, e.g., \citet{arevalillo2012maximum},
\citet{azzalini2014skew}, and \citet{van2000asymptotic}.

\item[(v)] The population optimal cutoff is non-degenerate, that is,
\[
\partial_c\varphi(c^*,\theta^*)\neq 0.
\]
This condition ensures that the solution of the estimating equation $\varphi(c,\theta)=0$ is locally unique and continuously differentiable as a function of $\theta$. Throughout the paper, we assume that the fitted SMSN submodels satisfy conditions~(i)--(v). Under these assumptions, the
plug-in estimator $\hat c$ inherits its large-sample behaviour from that of $\hat\theta$ through the implicit function theorem and the multivariate delta method.
\end{itemize}

\subsection{Consistency and asymptotic normality of the cutoff estimator}

We now formalise the large-sample properties of $\hat c$, building on the implicit function representation of the optimal cutoff. For all sufficiently large $n$, let $\hat c$ denote the unique solution of $\varphi(\hat c,\hat\theta)=0$ in a neighbourhood of
$c^*$.

\begin{remark}[Stochastic order notation]
\label{rem:stochastic-order}
Throughout the asymptotic arguments below, we use the standard
stochastic order notation. Let $(a_n)$ be a sequence of positive real numbers.
\begin{itemize}
  \item We write $X_n = O_p(a_n)$ if the sequence $X_n/a_n$ is bounded
        in probability: for every $\varepsilon > 0$ there exist constants
        $M_\varepsilon < \infty$ and $N_\varepsilon$ such that
        \[
        P\!\left(\left|\frac{X_n}{a_n}\right| > M_\varepsilon\right)
        < \varepsilon
        \quad \text{for all } n \geq N_\varepsilon.
        \]
  \item We write $X_n = o_p(a_n)$ if $X_n/a_n \xrightarrow{P} 0$.
        In particular, $X_n = o_p(a_n)$ implies $X_n = O_p(a_n)$.
\end{itemize}
\end{remark}

\begin{theorem}[\textbf{Consistency and asymptotic normality}]
\label{thm:asymptotics}
Assume the conditions of Proposition~\ref{prop:optimal-cutoff}. Suppose further that:
\begin{enumerate}
\item the map $(c,\theta)\mapsto\varphi(c,\theta)$ is continuously differentiable in a neighbourhood of $(c^*,\theta^*)$, where $c^*=c^*(\theta^*)$;
\item
\[
\partial_c\varphi(c^*,\theta^*)\neq 0;
\]
\item the estimator $\hat\theta$ satisfies
\[
\sqrt{n}(\hat\theta-\theta^*)\stackrel{d}{\longrightarrow}\mathcal N(0,\Sigma).
\]
\end{enumerate}
Then:
\begin{enumerate}
\item $\hat c\xrightarrow{P}c^*$;
\item
\[
\sqrt{n}(\hat c-c^*)\stackrel{d}{\longrightarrow}\mathcal N(0,V),
\]
where
\begin{equation}
\label{eq:V-general}
V
=
\nabla_\theta c^*(\theta^*)^\top\Sigma\,\nabla_\theta c^*(\theta^*),
\end{equation}
and
\begin{equation}
\label{eq:gradient-cstar}
\nabla_\theta c^*(\theta^*)
= - \frac{\nabla_\theta\varphi(c^*,\theta^*)}{\partial_c\varphi(c^*,\theta^*)}.
\end{equation}
Equivalently,
\begin{equation}
\label{eq:variance}
V
=
\frac{
\bigl[\nabla_\theta\varphi(c^*,\theta^*)\bigr]^\top
\Sigma
\bigl[\nabla_\theta\varphi(c^*,\theta^*)\bigr]
}{
\bigl[\partial_c\varphi(c^*,\theta^*)\bigr]^2
}.
\end{equation}
\end{enumerate}
\end{theorem}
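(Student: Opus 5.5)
The plan is to reduce everything to the smooth map $\theta\mapsto c^*(\theta)$ supplied by Theorem~\ref{thm:regularity}, and then transfer the asymptotics of $\hat\theta$ through that map. Assumptions~1 and~2 of the present theorem, together with $\varphi(c^*,\theta^*)=0$, are exactly hypotheses (1)--(3) of Theorem~\ref{thm:regularity} with $c_0=c^*$. Here $c^*\in(a,b)$ is the unique root, and it is the global minimiser by Proposition~\ref{prop:optimal-cutoff}. Theorem~\ref{thm:regularity} then yields an open neighbourhood $\mathcal U$ of $\theta^*$, an open interval $\mathcal W_0\ni c^*$, and a $C^1$ map $c^*:\mathcal U\to\mathcal W_0$. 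This map is the unique root of $\varphi(\cdot,\theta)=0$ in $\mathcal W_0$, and its gradient is given by \eqref{eq:gradient-cstar} at $\theta=\theta^*$.

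\textbf{Step 1 (consistency).} From assumption~3, $\hat\theta-\theta^*=O_p(n^{-1/2})$, so $\hat\theta\xrightarrow{P}\theta^*$, and hence $P(\hat\theta\in\mathcal U)\to1$. On the event $\{\hat\theta\in\mathcal U\}$, the estimator $\hat c$ is by definition the unique root of $\varphi(\cdot,\hat\theta)$ near $c^*$, i.e.\ in $\mathcal W_0$. By the uniqueness clause of the implicit function theorem this root equals $c^*(\hat\theta)$. So $\hat c=c^*(\hat\theta)$ with probability tending to one, and the difference on the complementary event is $o_p(n^{-1/2})$ (indeed $o_p(a_n)$ for any rate), since that event has vanishing probability. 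Continuity of $c^*(\cdot)$ at $\theta^*$ and the continuous mapping theorem then give $\hat c\xrightarrow{P}c^*(\theta^*)=c^*$.

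\textbf{Step 2 (asymptotic normality).} Since $c^*(\cdot)$ is $C^1$, hence differentiable at $\theta^*$, the multivariate delta method applied to $\sqrt n(\hat\theta-\theta^*)\xrightarrow{d}\mathcal N(0,\Sigma)$ gives
\[
\sqrt n\bigl(c^*(\hat\theta)-c^*\bigr)
=\nabla_\theta c^*(\theta^*)^\top\sqrt n(\hat\theta-\theta^*)+o_p(1)
\xrightarrow{d}\mathcal N\bigl(0,\nabla_\theta c^*(\theta^*)^\top\Sigma\,\nabla_\theta c^*(\theta^*)\bigr).
\]
Slutsky's lemma, combined with $\sqrt n(\hat c-c^*(\hat\theta))=o_p(1)$ from Step~1, transfers this limit to $\hat c$. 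Substituting \eqref{eq:gradient-cstar}, in which the scalar denominator factors out of the quadratic form, gives \eqref{eq:variance}.

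\textbf{Main obstacle.} The delicate point is the identification $\hat c=c^*(\hat\theta)$ in Step~1. Theorem~\ref{thm:regularity} gives uniqueness only inside the local window $\mathcal W_0$, so the definition of $\hat c$ as ``the root near $c^*$'' must be read as the root in that fixed window. A global root-finder could otherwise pick up a spurious root of $\varphi(\cdot,\hat\theta)$ elsewhere. To exclude this globally, I would argue that strict monotonicity of the likelihood ratio under MLR makes the root of $\varphi(\cdot,\theta^*)$ unique on $[a,b]$. Then uniform convergence $\sup_{c\in[a,b]}|\varphi(c,\hat\theta)-\varphi(c,\theta^*)|\xrightarrow{P}0$, which follows from joint continuity of $\varphi$ on the compact set $[a,b]\times\bar B(\theta^*,r)$, forces every root of $\varphi(\cdot,\hat\theta)$ into $\mathcal W_0$ with probability tending to one. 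Uniqueness inside $\mathcal W_0$ does not need local monotonicity of $\varphi(\cdot,\hat\theta)$: the implicit function theorem already gives at most one root there once $\hat\theta\in\mathcal U$.
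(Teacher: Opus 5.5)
Your proposal is correct and follows essentially the same route as the paper: the implicit function theorem (via Theorem~\ref{thm:regularity}) gives the $C^1$ branch $\theta\mapsto c^*(\theta)$, consistency comes from the continuous mapping theorem, and asymptotic normality comes from the delta method together with the gradient formula \eqref{eq:gradient-cstar}. Your extra step, showing that $\hat c=c^*(\hat\theta)$ with probability tending to one (and your optional exclusion of spurious roots on $[a,b]$), makes rigorous an identification the paper simply asserts. Note that this optional argument uses joint continuity of $\varphi$ on $[a,b]\times\bar B(\theta^*,r)$, which holds for SMSN densities but is not implied by the theorem's local $C^1$ hypothesis.
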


\begin{proof}
Since $\varphi(c^*,\theta^*)=0$ and $\partial_c\varphi(c^*,\theta^*)\neq 0$, the implicit function theorem guarantees the existence of a neighbourhood $\mathcal U$ of $\theta^*$ and a unique continuously differentiable function
\[
\theta\mapsto c^*(\theta)
\]
such that
\[
\varphi(c^*(\theta),\theta)=0,
\qquad
\theta\in\mathcal U,
\]
with $c^*(\theta^*)=c^*$.

\noindent
Because $\hat\theta\xrightarrow{P}\theta^*$ and $c^*(\cdot)$ is continuous at $\theta^*$, the continuous mapping theorem yields
\[
\hat c=c^*(\hat\theta)\xrightarrow{P}c^*(\theta^*)=c^*.
\]
This proves consistency.

For asymptotic normality, apply a first-order Taylor expansion of the differentiable map $\theta \mapsto c^*(\theta)$ around $\theta^*$:
\[
c^*(\hat\theta)-c^*(\theta^*)
=
\nabla_\theta c^*(\theta^*)^\top(\hat\theta-\theta^*)
+
\ell (\hat\theta),
\]
where the remainder term $\ell(\hat\theta)$ satisfies
\[
\frac{\ell(\hat\theta)}{\|\hat\theta-\theta^*\|} \xrightarrow{P} 0.
\]

\noindent
Since $\hat\theta$ is $\sqrt{n}$-consistent, i.e.,
\[
\hat\theta - \theta^* = O_p(n^{-1/2}),
\]
it follows that
\[
\ell(\hat\theta)
=
\|\hat\theta-\theta^*\|
\cdot
\frac{\ell(\hat\theta)}{\|\hat\theta-\theta^*\|}
=
O_p(n^{-1/2}) \cdot o_p(1)
=
o_p(n^{-1/2}).
\]

\noindent
Therefore,
\[
c^*(\hat\theta)-c^*(\theta^*)
=
\nabla_\theta c^*(\theta^*)^\top(\hat\theta-\theta^*)
+
o_p(n^{-1/2}).
\]

\noindent
Multiplying both sides by $\sqrt{n}$ yields
\[
\sqrt{n}(\hat c-c^*)
=
\nabla_\theta c^*(\theta^*)^\top \sqrt{n}(\hat\theta-\theta^*)
+
o_p(1).
\]
By~\eqref{eq:theta-an} and the multivariate delta method,
\[
\sqrt n(\hat c-c^*)
\stackrel{d}{\longrightarrow}
\mathcal N\!\left(
0,
\nabla_\theta c^*(\theta^*)^\top\Sigma\,\nabla_\theta c^*(\theta^*)
\right),
\]
which proves~\eqref{eq:V-general}.

\noindent
It remains to compute $\nabla_\theta c^*(\theta^*)$. Differentiate the identity
\[
\varphi(c^*(\theta),\theta)=0
\]
with respect to $\theta$. By the chain rule,
\[
\partial_c\varphi(c^*(\theta),\theta)\,\nabla_\theta c^*(\theta)
+
\nabla_\theta\varphi(c^*(\theta),\theta)
=
0.
\]
Evaluating at $\theta=\theta^*$ and solving for $\nabla_\theta c^*(\theta^*)$ yields
\[
\nabla_\theta c^*(\theta^*)
= -\frac{\nabla_\theta\varphi(c^*,\theta^*)}{\partial_c\varphi(c^*,\theta^*)},\]
which proves~\eqref{eq:gradient-cstar}. Substituting this into~\eqref{eq:V-general} gives~\eqref{eq:variance}. The explicit expressions follow from
\[
\partial_c\varphi(c^*,\theta^*)
=
\lambda_1\pi_1 \partial_c f_1(c^*;\theta_1^*)-\lambda_0\pi_0 \partial_c f_0(c^*;\theta_0^*)
\]
and
\[
\nabla_\theta\varphi(c^*,\theta^*)
=
\begin{pmatrix}
-\lambda_0\pi_0\,\nabla_{\theta_0}f_0(c^*;\theta_0^*)\\[0.4em]
\lambda_1\pi_1\,\nabla_{\theta_1}f_1(c^*;\theta_1^*)
\end{pmatrix}.
\]
\end{proof}

\begin{remark}[Interpretation of the asymptotic variance]
\label{rem:variance-interpretation} 

The asymptotic variance
\[
V
=
\frac{
\bigl[\nabla_\theta \varphi(c^*,\theta^*)\bigr]^\top
\Sigma
\bigl[\nabla_\theta \varphi(c^*,\theta^*)\bigr]
}{
\bigl[\partial_c \varphi(c^*,\theta^*)\bigr]^2
}
\]
admits a natural interpretation in terms of sensitivity and local information. The numerator reflects the propagation of uncertainty from the parameter estimator $\hat\theta$ to the cutoff through the gradient $\nabla_\theta \varphi(c^*,\theta^*)$. It combines the sensitivity of the decision rule to the model parameters with their estimation variability, as encoded in the covariance matrix $\Sigma$. The denominator, $\partial_c \varphi(c^*,\theta^*)$, captures the local slope of the estimating equation at the optimal cutoff. This quantity measures how sharply the decision boundary is defined. When its magnitude is large, the equation $\varphi(c,\theta)=0$ crosses zero steeply, leading to a stable and well-identified cutoff. Conversely, when $\partial_c \varphi(c^*,\theta^*)$ is close to zero, the boundary becomes locally flat, and small perturbations in $\theta$ may induce large changes in the solution $c^*(\theta)$.

In this sense, $\bigl|\partial_c\varphi(c^*,\theta^*)\bigr|$
plays a role analogous to that of information in classical
parametric inference: just as low Fisher information leads to high variance of estimators, a small slope of the estimating equation leads to inflated asymptotic variance and weak identifiability of the cutoff.
Overall, the variability of the estimated cutoff is jointly governed by uncertainty in parameter estimation and the local geometry of the decision boundary, highlighting the importance of both model fit and discriminative structure.
\end{remark}

We now detail the general asymptotic result of Theorem~\ref{thm:asymptotics} to the practically relevant setting in which the two populations are modelled separately using SMSN distributions.

\begin{proposition}[\textbf{Asymptotic distribution under separate SMSN maximum likelihood fits}]
\label{prop:separate-mle}

Assume that the two groups are independent, with
\[
X_{01},\ldots,X_{0n_0}\stackrel{\mathrm{i.i.d.}}{\sim} f_0(\cdot;\theta_0^*),
\qquad
X_{11},\ldots,X_{1n_1}\stackrel{\mathrm{i.i.d.}}{\sim} f_1(\cdot;\theta_1^*),
\]
where, for each $k\in\{0,1\}$, $f_k(\cdot;\theta_k)$ belongs to a regular
univariate SMSN subfamily, in particular the skew-normal or skew-$t$
family, with true parameter
$\theta_k^*\in\Theta_k\subset\mathbb R^{p_k}$.
Suppose that $\hat\theta_0$ and $\hat\theta_1$ are the maximum
likelihood estimators obtained by fitting the two models separately to
groups $0$ and $1$, respectively, and let
\[
\hat\theta=(\hat\theta_0^\top,\hat\theta_1^\top)^\top,
\qquad
\theta^*=(\theta_0^{*\top},\theta_1^{*\top})^\top.
\]

\noindent
Assume further that:
\begin{enumerate}
\item the conditions of Proposition~\ref{prop:optimal-cutoff} hold, so that the optimal cutoff $c^*=c^*(\theta^*)$ is uniquely defined;
\item the map $(c,\theta)\mapsto\varphi(c,\theta)$ is continuously differentiable in a neighbourhood of $(c^*,\theta^*)$, and
\[
\partial_c\varphi(c^*,\theta^*)\neq 0;
\]
\item for each $k\in\{0,1\}$, the groupwise SMSN log-likelihood satisfies the usual regularity conditions for maximum likelihood estimation, with per-observation Fisher information matrix $I_k(\theta_k^*)$ finite and positive definite;
\item the sample proportions satisfy
\[
\frac{n_0}{n}\to\eta_0\in(0,1),
\quad
\frac{n_1}{n}\to\eta_1\in(0,1),
\quad
n=n_0+n_1\to\infty.
\]
\end{enumerate}
Then,
\begin{enumerate}
\item the joint estimator $\hat\theta$ is consistent for $\theta^*$ and satisfies
\begin{equation}
\label{eq:joint-an-separate}
\sqrt{n}\,(\hat\theta-\theta^*)
\stackrel{d}{\longrightarrow}
\mathcal N(0,\Sigma_{\mathrm{sep}}),
\end{equation}
where
\begin{equation}
\label{eq:sigma-sep}
\Sigma_{\mathrm{sep}}
=
\begin{pmatrix}
\eta_0^{-1}I_0(\theta_0^*)^{-1} & 0\\[0.4em]
0 & \eta_1^{-1}I_1(\theta_1^*)^{-1}
\end{pmatrix};
\end{equation}

\item let $c^*(\theta)$ denote the local branch of solutions to
$\varphi(c,\theta)=0$ defined in a neighbourhood of $\theta^*$ by
Theorem~\ref{thm:regularity}, and define the plug-in estimator by
$\hat c=c^*(\hat\theta)$.
Then $\hat c$ is consistent for $c^*$ and
\begin{equation}
\label{eq:chat-an-separate}
\sqrt{n}\,(\hat c-c^*)
\stackrel{d}{\longrightarrow}
\mathcal N(0,V_{\mathrm{sep}}),
\end{equation}
with asymptotic variance
\begin{equation}
\label{eq:v-sep}
V_{\mathrm{sep}}
=
\frac{
\bigl[\nabla_\theta\varphi(c^*,\theta^*)\bigr]^\top
\Sigma_{\mathrm{sep}}
\bigl[\nabla_\theta\varphi(c^*,\theta^*)\bigr]
}{
\bigl[\partial_c\varphi(c^*,\theta^*)\bigr]^2
}.
\end{equation}
Equivalently,
\begin{equation}
\label{eq:v-sep-expanded}
\begin{split}
V_{\mathrm{sep}}
&=
\frac{
\lambda_0^2\pi_0^2\,
\nabla_{\theta_0}f_0(c^*;\theta_0^*)^\top
\eta_0^{-1}I_0(\theta_0^*)^{-1}
\nabla_{\theta_0}f_0(c^*;\theta_0^*)
}{
\bigl[
\lambda_1\pi_1 \partial_c f_1(c^*;\theta_1^*)
-
\lambda_0\pi_0 \partial_c f_0(c^*;\theta_0^*)
\bigr]^2
}
\\
&\quad+
\frac{
\lambda_1^2\pi_1^2\,
\nabla_{\theta_1}f_1(c^*;\theta_1^*)^\top
\eta_1^{-1}I_1(\theta_1^*)^{-1}
\nabla_{\theta_1}f_1(c^*;\theta_1^*)
}{
\bigl[
\lambda_1\pi_1 \partial_c f_1(c^*;\theta_1^*)
-
\lambda_0\pi_0 \partial_c f_0(c^*;\theta_0^*)
\bigr]^2
}.
\end{split}
\end{equation}
\end{enumerate}
\end{proposition}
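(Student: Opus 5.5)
The plan is to reduce everything to Theorem~\ref{thm:asymptotics}, so that the only genuinely new ingredient is the joint limit law~\eqref{eq:joint-an-separate} with the block-diagonal covariance $\Sigma_{\mathrm{sep}}$.

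\textbf{Step 1: groupwise limits.} For each $k\in\{0,1\}$, assumption~(3) together with the regularity conditions (i)--(iii) gives the classical MLE result in terms of the group's own sample size:
\[
\sqrt{n_k}\,(\hat\theta_k-\theta_k^*)\stackrel{d}{\longrightarrow}\mathcal N\bigl(0,I_k(\theta_k^*)^{-1}\bigr).
\]
Consistency of $\hat\theta_k$ follows from this and the standard references already cited. To rescale to the common rate, I will write
\[
\sqrt{n}\,(\hat\theta_k-\theta_k^*)=\sqrt{n/n_k}\,\sqrt{n_k}\,(\hat\theta_k-\theta_k^*).
\]
By assumption~(4), $\sqrt{n/n_k}\to\eta_k^{-1/2}$, a deterministic constant. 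Slutsky's lemma then yields the limit $\mathcal N\bigl(0,\eta_k^{-1}I_k(\theta_k^*)^{-1}\bigr)$.

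\textbf{Step 2: joint convergence.} The two samples are independent, so $\hat\theta_0$ and $\hat\theta_1$ are independent random vectors for every $n$. Weak convergence of each marginal therefore implies joint convergence to the product of the limits. One way to see this is that characteristic functions factorise, $\mathbb E\,e^{i(t_0^\top Z_0+t_1^\top Z_1)}=\mathbb E\,e^{it_0^\top Z_0}\,\mathbb E\,e^{it_1^\top Z_1}$, and each factor converges. The product of the two Gaussian limits is exactly $\mathcal N(0,\Sigma_{\mathrm{sep}})$ with the block-diagonal form~\eqref{eq:sigma-sep}, which proves part~1. Since $\eta_k\in(0,1)$ and each $I_k$ is positive definite, $\Sigma_{\mathrm{sep}}$ is positive definite, so condition~(iv) holds.

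\textbf{Step 3: the cutoff.} Assumptions~(1)--(2) are precisely the hypotheses of Theorem~\ref{thm:asymptotics}, and Step~2 supplies its hypothesis~(3) with $\Sigma=\Sigma_{\mathrm{sep}}$. Applying that theorem gives consistency of $\hat c=c^*(\hat\theta)$, which is well defined with probability tending to one because $\hat\theta\in\mathcal U$ eventually. It also gives~\eqref{eq:chat-an-separate} with $V_{\mathrm{sep}}$ as in~\eqref{eq:v-sep}. For the expanded form~\eqref{eq:v-sep-expanded}, I will substitute the explicit gradient computed at the end of the proof of Theorem~\ref{thm:asymptotics}:
\[
\nabla_\theta\varphi=\bigl(-\lambda_0\pi_0\nabla_{\theta_0}f_0^\top,\;\lambda_1\pi_1\nabla_{\theta_1}f_1^\top\bigr)^\top.
\]
Because $\Sigma_{\mathrm{sep}}$ is block diagonal, the quadratic form splits into two terms with no cross term. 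The sign of the $f_0$ block disappears upon squaring, and $\partial_c\varphi$ is substituted into the denominator.

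\textbf{Main obstacle.} The only delicate point is Step~2: passing from marginal to joint convergence, together with the random-ratio rescaling. Both are handled by independence and by the fact that $n/n_k$ is deterministic and convergent. I will also note that, in the skew-normal case, the regularity in~(3) excludes the singular Fisher information at $\alpha=0$, which is why that condition is imposed as an assumption rather than derived.
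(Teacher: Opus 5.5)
Your proposal is correct and follows the same route as the paper's proof. You obtain the groupwise MLE limits, rescale them by the deterministic factor $\sqrt{n/n_k}\to\eta_k^{-1/2}$, and use independence of the two samples to get the block-diagonal joint limit $\Sigma_{\mathrm{sep}}$. You then apply Theorem~\ref{thm:asymptotics} with $\Sigma=\Sigma_{\mathrm{sep}}$, and the block structure splits the quadratic form into the two terms of~\eqref{eq:v-sep-expanded}. Your added details make explicit several steps the paper only asserts: Slutsky's lemma for the rescaling, factorisation of characteristic functions for joint convergence, positive definiteness of $\Sigma_{\mathrm{sep}}$, and the fact that $\hat c=c^*(\hat\theta)$ is defined with probability tending to one.
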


\begin{proof}
Because the seronegative and seropositive samples are independent and the two SMSN models are fitted separately, the estimators $\hat\theta_0$ and $\hat\theta_1$ are asymptotically independent. For each $k\in\{0,1\}$, standard likelihood theory for regular parametric models yields
\[
\sqrt{n_k}\,(\hat\theta_k-\theta_k^*)
\stackrel{d}{\longrightarrow}
\mathcal N\bigl(0,I_k(\theta_k^*)^{-1}\bigr).
\]
Since $n_k/n\to\eta_k$, we may rewrite this in the $\sqrt n$-scale as
\[
\sqrt n\,(\hat\theta_k-\theta_k^*)
=
\sqrt{\frac{n}{n_k}}
\sqrt{n_k}\,(\hat\theta_k-\theta_k^*)
\stackrel{d}{\longrightarrow}
\mathcal N\bigl(0,\eta_k^{-1}I_k(\theta_k^*)^{-1}\bigr).
\]
By independence between groups, it follows that
\[
\sqrt n\,(\hat\theta-\theta^*)
\stackrel{d}{\longrightarrow}
\mathcal N(0,\Sigma_{\mathrm{sep}}),
\]
with $\Sigma_{\mathrm{sep}}$ given by~\eqref{eq:sigma-sep}. This proves~\eqref{eq:joint-an-separate}.

The asymptotic distribution of $\hat c=c^*(\hat\theta)$ then follows
immediately from Theorem~\ref{thm:asymptotics}, applied with
$\Sigma=\Sigma_{\mathrm{sep}}$. Indeed,
\[
\sqrt n\,(\hat c-c^*)
\stackrel{d}{\longrightarrow}
\mathcal N\!\left(
0,\,
\frac{
\bigl[\nabla_\theta\varphi(c^*,\theta^*)\bigr]^\top
\Sigma_{\mathrm{sep}}
\bigl[\nabla_\theta\varphi(c^*,\theta^*)\bigr]
}{
\bigl[\partial_c\varphi(c^*,\theta^*)\bigr]^2
}
\right),
\]
which gives~\eqref{eq:v-sep}. Finally, using the block structure of $\Sigma_{\mathrm{sep}}$ together with
\[
\nabla_\theta\varphi(c^*,\theta^*)
=
\begin{pmatrix}
-\lambda_0\pi_0\,\nabla_{\theta_0}f_0(c^*;\theta_0^*)\\[0.4em]
\lambda_1\pi_1\,\nabla_{\theta_1}f_1(c^*;\theta_1^*)
\end{pmatrix},
\]
and
\[
\partial_c\varphi(c^*,\theta^*)
=
\lambda_1\pi_1 \partial_c f_1(c^*;\theta_1^*)
-
\lambda_0\pi_0 \partial_c f_0(c^*;\theta_0^*),
\]
we obtain the expanded formula~\eqref{eq:v-sep-expanded}.
\end{proof}

The asymptotic results above provide the theoretical basis for statistical inference on the optimal cutoff. In practice, however, the asymptotic variance depends on unknown population quantities and must therefore be estimated. We now introduce a plug-in estimator of this variance and the corresponding Wald-type confidence interval for the cutoff.

\subsection{Estimated asymptotic variance and confidence interval}

In practice, the asymptotic variance $V$ is unknown because it depends on the unknown quantities $c^*$, $\theta^*$, and $\Sigma$. A natural estimator is obtained by plug-in substitution:
\begin{equation}
\label{eq:V-hat}
\hat V
=
\frac{
\bigl[\nabla_\theta\varphi(\hat c,\hat\theta)\bigr]^\top
\hat\Sigma
\bigl[\nabla_\theta\varphi(\hat c,\hat\theta)\bigr]
}{
\bigl[\partial_c\varphi(\hat c,\hat\theta)\bigr]^2
},
\end{equation}
where $\hat\Sigma$ is a consistent estimator of $\Sigma$. In the present setting, $\hat\Sigma$ is obtained from the block-diagonal inverse observed information matrices associated with the two separately fitted group models.

\noindent
Accordingly, the asymptotic standard error of $\hat c$ is estimated by
\[
\operatorname{se}(\hat c)=\sqrt{\hat V/n},
\]
and an approximate $(1-\alpha)$ Wald confidence interval for the optimal cutoff is given by
\begin{equation}
\label{eq:ci}
\hat c\pm z_{1-\alpha/2}\sqrt{\hat V/n},
\end{equation}
where $z_{1-\alpha/2}$ is the $(1-\alpha/2)$ quantile of the standard normal distribution.

Although this interval is asymptotically justified, its finite-sample performance may deteriorate when the estimating equation is locally flat around the optimum, when the fitted SMSN model is numerically unstable, or when the sampling distribution of $\hat c$ is noticeably skewed. In such cases, bootstrap resampling within each group may be used as a complementary inferential tool.

\subsubsection*{From theory to empirical behaviour}

The theoretical results developed above show that the optimal cutoff selection depends not only on the global separation between the two populations, but also on the local geometry of the decision boundary. In particular, both the displacement from the Youden cutoff and the asymptotic variance of the plug-in estimator are governed by quantities
that reflect the local behaviour of the estimating equation, or equivalently of the likelihood ratio, near the optimal threshold.

This has important practical consequences. A biomarker may display satisfactory global discrimination, yet still yield a weakly identifiable cutoff if the decision boundary is locally flat in the overlap region between the two class densities. In such cases, the optimal threshold becomes more sensitive to model perturbations and more variable in finite samples. Conversely, when the likelihood ratio changes steeply near the decision boundary, the cutoff is more stable and more precisely estimated.

These considerations motivate an empirical assessment of the proposed methodology. In the following sections, we examine these behaviours through Monte Carlo simulations and their application to serological data. The simulation study is designed to assess the finite-sample behaviour of the plug-in estimator, the accuracy of the asymptotic variance approximation, and the coverage of Wald confidence intervals under different distributional configurations. The real-data analysis then illustrates how these theoretical insights translate into practical
decision-making in a biomedical setting.

\section{Simulation Study}
\label{sec:simulation}

A simulation study was conducted with two complementary objectives:
(i)~assess empirically the asymptotic theory established in
Theorem~\ref{thm:asymptotics}, namely the consistency of the plug-in
estimator $\hat{c}$, the adequacy of the normal approximation to the
distribution of $\sqrt{n}(\hat{c} - c^*)$, and the accuracy of the
plug-in variance estimator $\hat{V}$; (ii)~characterise the
finite-sample behaviour of the estimator across a range of practically
relevant distributional regimes, with particular emphasis on the role of
the local identifiability diagnostic $|\partial_c\varphi(c^*,\theta^*)|$
as a measure of the stability of the decision boundary.

\subsection{Scenario design}

Six simulation scenarios were constructed to span a
representative range of configurations of the decision
problem, combining different distributional families,
degrees of class separation, and cost--prevalence
structures. A preliminary numerical screening was conducted for each scenario to verify that the estimating equation $c\mapsto\varphi(c,\theta)$ admits a unique non-degenerate root over the effective support of the data, as required by the regularity conditions underlying Theorem~\ref{thm:asymptotics}.
Only configurations satisfying this criterion were retained.

Three scenarios were specified under the skew-normal family and three under the skew-$t$ family, as detailed in Tables~\ref{tab:scenario-metadata-sn}
and~\ref{tab:scenario-metadata-st}.

\begin{table}[ht]
\centering
\caption{Simulation scenarios under the skew-normal distribution.}
\label{tab:scenario-metadata-sn}
\small
\setlength{\tabcolsep}{4pt}
\begin{tabular}{@{}llcccccccccc@{}}
\toprule
\textbf{Scenario} & \textbf{Description} & $\lambda_0$ & $\lambda_1$ &
  $\pi_0$ & $\pi_1$ & $\xi_0$ & $\omega_0$ & $\alpha_0$ &
  $\xi_1$ & $\omega_1$ & $\alpha_1$ \\
\midrule
SN1 & Balanced, moderate separation & 1 & 1 & 0.5 & 0.5 & 0 & 1 & 1 & 2 & 1 & 1.5 \\
SN2 & Asymmetric prevalence/cost   & 1 & 3 & 0.8 & 0.2 & 0 & 1 & 1.5 & 2 & 1.2 & 2 \\
SN3 & Weak separation              & 1 & 1 & 0.5 & 0.5 & 0 & 1 & 1 & 1 & 1 & 1.2 \\
\bottomrule
\end{tabular}
\medskip\\
{\footnotesize\textit{Note:} $(\lambda_0,\lambda_1)$ denote misclassification costs;
$(\pi_0,\pi_1)$ denote class prevalences; $(\xi,\omega,\alpha)$ are
location, scale, and skewness of the skew-normal distribution.}
\end{table}

\begin{table}[ht]
\centering
\caption{Simulation scenarios under the skew-$t$ distribution.}
\label{tab:scenario-metadata-st}
\small
\setlength{\tabcolsep}{3pt}
\begin{tabular}{@{}llcccccccccccc@{}}
\toprule
\textbf{Scenario} & \textbf{Description} & $\lambda_0$ & $\lambda_1$ &
  $\pi_0$ & $\pi_1$ & $\xi_0$ & $\omega_0$ & $\alpha_0$ & $\nu_0$ &
  $\xi_1$ & $\omega_1$ & $\alpha_1$ & $\nu_1$ \\
\midrule
ST1 & Balanced, moderate tails      & 1 & 1 & 0.5 & 0.5 & 0 & 1 & 1 & 8 & 2 & 1 & 1.5 & 8 \\
ST2 & Asymmetric prevalence/cost    & 1 & 3 & 0.8 & 0.2 & 0 & 1 & 1.5 & 7 & 2 & 1.2 & 2 & 7 \\
ST3 & Weak separation, heavy tails  & 1 & 1 & 0.5 & 0.5 & 0 & 1 & 1 & 5 & 1 & 1 & 1.2 & 5 \\
\bottomrule
\end{tabular}
\medskip\\
{\footnotesize\textit{Note:} In addition to location, scale, and skewness parameters,
$\nu$ denotes the degrees of freedom controlling tail heaviness.}
\end{table}

\noindent
Within each family, the scenarios were designed to represent qualitatively distinct regimes:
\begin{itemize}
  \item \textup{(SN1, ST1):} a balanced symmetric setting with moderate class separation, serving as a baseline reference;
  \item \textup{(SN2, ST2):} an asymmetric setting with
        unequal costs and prevalences
        $(\lambda_0=1,\lambda_1=3;\,\pi_0=0.8,\pi_1=0.2)$,
        so that $\lambda_0\pi_0\neq\lambda_1\pi_1$ and the
        optimal cutoff differs from the Youden cutoff;
  \item \textup{(SN3, ST3):} a weak-separation regime in which $\xi_1-\xi_0=1$, producing a flatter estimating equation and a larger asymptotic variance.
\end{itemize}
The skew-$t$ scenarios additionally introduce heavier tails through degrees of freedom $\nu\in\{5,7,8\}$, thereby requiring estimation of an extra tail parameter and increasing the variability of the cutoff estimator relative to the corresponding skew-normal settings.

All simulations were conducted under correctly specified models, so that the fitted SMSN distributions coincide with the data-generating mechanisms, ensuring that discrepancies
from the asymptotic predictions are attributable solely to finite-sample effects. For each scenario, data were generated independently within
each group according to
\[
X_{k1},\ldots,X_{kn_k}
\stackrel{\mathrm{i.i.d.}}{\sim}
f_k(\cdot\,;\theta_k^*),
\quad k\in\{0,1\},
\]
with balanced group sizes $n_0=n_1=n/2$.
The skew-normal distributions were parametrised as
$\theta_k=(\xi_k,\log\omega_k,\alpha_k)$ and the skew-$t$
distributions as
$\theta_k=(\xi_k,\log\omega_k,\alpha_k,\log(\nu_k-2))$,
ensuring $\omega_k>0$ and $\nu_k>2$ throughout the
optimisation.

The true optimal cutoff $c^*$ was computed as the unique
solution of $\varphi(c,\theta^*)=0$ over the effective
support of the data, and the parametric Youden cutoff
$c_Y$ as the solution of $f_1(c;\theta_1^*)=f_0(c;\theta_0^*)$.

For each scenario and sample size, we generated $B=2000$
Monte Carlo replications.
A replication was classified as successful if all numerical
steps of the procedure were completed, namely: successful
maximum likelihood estimation in both groups, successful
solution of the cutoff equation, invertibility of the
regularised observed Hessian matrices, and a finite positive
plug-in variance estimate.
In addition, we required the estimated local slope of the
estimating equation at the cutoff to satisfy
$|\partial_c\varphi(\hat c,\hat\theta)|>10^{-3}$.
This threshold was introduced solely as a numerical
stability safeguard.
Indeed, the plug-in asymptotic variance involves division by
$\bigl[\partial_c\varphi(\hat c,\hat\theta)\bigr]^2$, so
near-zero values of $\partial_c\varphi$ may produce unstable variance estimates due to numerical flatness of the estimating equation.
The threshold $10^{-3}$ is therefore not a substantive model assumption, but a conservative filter excluding only practically degenerate numerical cases. Its empirical impact was negligible. Across the $18$ simulation configurations ($6$ scenarios $\times$ $3$ sample sizes), only $10$ replications out of a total of $36{,}000$ were excluded, corresponding to an overall success rate of $0.9997$.
The largest number of exclusions occurred in scenario ST2 at $n=200$ (four replications), while most configurations had either no exclusions or only a single excluded replication. Accordingly, the column ``Succ.'' in Table~\ref{tab:asymptotic_validation} should be interpreted as a measure of numerical reliability of the complete estimation-and-inference pipeline, rather than as evidence of any substantive failure of the proposed methodology.

\subsubsection{Choice of the admissible interval \texorpdfstring{$\mathcal{C}=[a,b]$}{C=[a,b]}}

The admissible interval $\mathcal C=[a,b]$ is the compact decision region over which the cutoff optimisation is performed. Its role is both theoretical and practical. From a theoretical perspective, compactness guarantees the existence of a minimiser whenever the risk function $c\mapsto R(c;\theta)$ is continuous. From a numerical perspective, restricting the optimisation to a statistically relevant range of marker values improves stability and avoids solutions driven by extreme tail behaviour, where fitted densities may become very small and direct likelihood-ratio evaluation may be unstable.

In the simulation study, the interval $[a,b]$ was constructed from the effective support of the fitted class distributions. More precisely, for each group $k\in\{0,1\}$, we computed the lower and
upper quantiles $q_{k,\alpha/2}$ and $q_{k,1-\alpha/2}$ of the corresponding fitted SMSN model, with $\alpha=0.01$, and defined
\[
a=\min(q_{0,\alpha/2},\,q_{1,\alpha/2}),
\quad
b=\max(q_{0,1-\alpha/2},\,q_{1,1-\alpha/2}).
\]
Thus, the optimisation region contains the central $99\%$ effective
support of both fitted distributions and adapts automatically to
differences in location, scale, skewness, and tail behaviour across
scenarios.

After constructing this initial interval, the boundary conditions were checked numerically through the estimating function rather than through the likelihood ratio itself. Specifically, we verified whether
\[
\varphi(a;\hat\theta)<0
\qquad\text{and}\qquad
\varphi(b;\hat\theta)>0,
\]
which ensures that the root of the estimating equation
$\varphi(c,\hat\theta)=0$ is bracketed inside $[a,b]$. This criterion is equivalent to the corresponding likelihood-ratio condition under the monotone likelihood-ratio setting, but is substantially more
stable numerically because it avoids direct division of two very small tail densities. Whenever one of the above inequalities failed, the interval was widened iteratively until the sign change was achieved.

This construction proved numerically stable in the simulation study.
As reported in Table~\ref{tab:asymptotic_validation}, the bracketing rate was equal or very close to one in all regular scenarios, and only the most challenging heavy-tailed weak-separation setting required
substantial interval expansion in smaller samples. In the serological
application (Section~\ref{sec:application}), by contrast, the admissible interval was determined from extreme quantiles of the log$_{10}$-transformed observations, as described in Section~\ref{sec:decision-theoretic-cutoff-estimation}.

\subsection{Monte Carlo simulation and performance measures}

For each scenario, total sample sizes
$n\in\{200,400,1000\}$ were considered, and
$B=2000$ Monte Carlo replications were performed.
In each replication, the SMSN model was fitted separately
in each group by maximum likelihood.
To ensure numerical accuracy of the plug-in variance
estimator, the observed Fisher information matrices were
computed via Richardson extrapolation as implemented in the
\texttt{numDeriv} package \citep{numDeriv}, rather than
the default finite-difference approximation returned by
the optimiser; this choice is critical for the reliability
of $\hat{V}$ in heavy-tailed models whose log-likelihood
involves numerical integration.
The plug-in cutoff $\hat{c}$ was obtained as the unique
solution of $\varphi(c,\hat\theta)=0$, and the plug-in asymptotic variance was computed as in equation \eqref{eq:V-hat}.
All gradients were evaluated numerically using Richardson extrapolation.

A replication was classified as successful if:
(i) the estimating equation exhibited a unique sign change in a neighbourhood of the solution;
(ii) the observed information matrices were numerically invertible; and
(iii) $|\partial_c\varphi(\hat{c},\hat\theta)|$ exceeded a tolerance of $10^{-3}$.
Replications failing any of these criteria were excluded from the asymptotic summaries.
Success rates were uniformly high across all scenarios and sample sizes, confirming the procedure's numerical stability under well-conditioned configurations.

The following quantities are reported in
Table~\ref{tab:asymptotic_validation}:
the empirical mean $\bar{\hat{c}}$, bias, root mean squared error (RMSE) and standard deviation (SD) of the estimator; the empirical variance of $\sqrt{n}(\hat{c}-c^*)$
(Var$_\mathrm{emp}$) and its average plug-in estimate
(Var$_\mathrm{th}$), together with their ratio
\[
\text{Ratio}
=
\frac{
  \mathrm{Var}\bigl\{\sqrt{n}(\hat{c}-c^*)\bigr\}
}{
  \mathbb{E}[\hat{V}]
};
\]
the mean local identifiability diagnostic
$|\partial_c\varphi|$; and the empirical coverage and
average length of the $95\%$ Wald confidence interval $\hat{c}\pm z_{0.975}\sqrt{\hat{V}/n}$.
With $B=2000$ replications, the empirical coverage is itself subject
to Monte Carlo error. Under a nominal coverage probability of $0.95$,
its standard error is $\sqrt{0.95(1-0.95)/2000}\approx 0.005$.
Thus, observed coverages within approximately $\pm 0.01$ of $0.95$,
that is, in the interval $[0.940,0.960]$, are compatible with a
well-calibrated $95\%$ procedure.

\begin{table*}[h!]
\centering
\caption{Monte Carlo validation of the asymptotic distribution of the optimal cutoff estimator under skew-normal (SN) and skew-$t$ (ST) distributions, based on $B=2000$ simulated samples per configuration. For each scenario and sample size, we report the empirical mean ($\bar{\hat{c}}$), bias, root mean squared error (RMSE), and standard deviation (SD) of the estimator. The columns Var$_{\mathrm{emp}}$ and Var$_{\mathrm{th}}$ denote, respectively, the empirical variance of $\sqrt{n}(\hat c-c^*)$ and the average plug-in asymptotic variance estimate, while their ratio assesses the accuracy of the asymptotic approximation. The quantity $|\partial_c\varphi|$ is reported through its empirical mean and summarises the local identifiability of the cutoff. Wald confidence intervals are summarised by their empirical coverage probability (Cov.) and average length (Len.). The column Succ. reports the proportion of successful Monte Carlo replications, and Bracket reports the proportion of replications in which the numerically constructed admissible interval satisfied the bracketing condition $\varphi(a;\hat\theta)<0<\varphi(b;\hat\theta)$.}
\label{tab:asymptotic_validation}

\scriptsize
\setlength{\tabcolsep}{3pt}
\renewcommand{\arraystretch}{0.9}

\begin{tabular*}{\textwidth}{@{\extracolsep\fill}lcccccccccccccc@{\extracolsep\fill}}
\toprule
\textbf{Scenario} & $c^*$ & $n$ & Succ. & $\bar{\hat{c}}$ & Bias & RMSE & SD & Var$_{\text{emp}}$ & Var$_{\text{th}}$ & Ratio & $|\partial_c \varphi|$ & Cov. & Len. & Bracket \\
\midrule

\multirow{3}{*}{SN1} & \multirow{3}{*}{1.6101}
& 200  & 1.0000 & 1.6103 &  0.0002 & 0.0636 & 0.0636 & 0.8083 & 0.8043 & 1.0049 & 0.4009 & 0.9480 & 0.2477 & 1.0000 \\
& & 400  & 0.9995 & 1.6098 & -0.0003 & 0.0453 & 0.0453 & 0.8216 & 0.8043 & 1.0215 & 0.3884 & 0.9435 & 0.1755 & 1.0000 \\
& & 1000 & 1.0000 & 1.6106 &  0.0005 & 0.0285 & 0.0285 & 0.8114 & 0.8150 & 0.9956 & 0.3829 & 0.9475 & 0.1118 & 1.0000 \\

\midrule

\multirow{3}{*}{SN2} & \multirow{3}{*}{1.7607}
& 200  & 1.0000 & 1.7586 & -0.0021 & 0.0673 & 0.0673 & 0.9060 & 0.8557 & 1.0588 & 0.5412 & 0.9410 & 0.2552 & 1.0000 \\
& & 400  & 1.0000 & 1.7607 &  0.0000 & 0.0467 & 0.0467 & 0.8718 & 0.8462 & 1.0303 & 0.5110 & 0.9430 & 0.1800 & 1.0000 \\
& & 1000 & 1.0000 & 1.7613 &  0.0007 & 0.0294 & 0.0294 & 0.8656 & 0.8390 & 1.0317 & 0.5018 & 0.9455 & 0.1135 & 1.0000 \\

\midrule

\multirow{3}{*}{SN3} & \multirow{3}{*}{1.0122}
& 200  & 0.9995 & 1.0070 & -0.0052 & 0.1001 & 0.1000 & 1.9983 & 1.9277 & 1.0366 & 0.3531 & 0.9415 & 0.3784 & 0.9965 \\
& & 400  & 0.9985 & 1.0106 & -0.0016 & 0.0672 & 0.0672 & 1.8070 & 1.8069 & 1.0001 & 0.3440 & 0.9449 & 0.2613 & 1.0000 \\
& & 1000 & 1.0000 & 1.0095 & -0.0027 & 0.0420 & 0.0419 & 1.7551 & 1.7475 & 1.0043 & 0.3393 & 0.9475 & 0.1635 & 1.0000 \\

\midrule

\multirow{3}{*}{ST1} & \multirow{3}{*}{1.6145}
& 200  & 1.0000 & 1.6185 &  0.0040 & 0.0758 & 0.0757 & 1.1459 & 1.1638 & 0.9846 & 0.3605 & 0.9430 & 0.2973 & 0.9930 \\
& & 400  & 0.9995 & 1.6134 & -0.0011 & 0.0551 & 0.0551 & 1.2162 & 1.1604 & 1.0481 & 0.3510 & 0.9395 & 0.2107 & 1.0000 \\
& & 1000 & 1.0000 & 1.6155 &  0.0010 & 0.0346 & 0.0346 & 1.1971 & 1.1624 & 1.0299 & 0.3456 & 0.9515 & 0.1336 & 1.0000 \\

\midrule

\multirow{3}{*}{ST2} & \multirow{3}{*}{1.7842}
& 200  & 0.9980 & 1.7840 & -0.0002 & 0.0830 & 0.0830 & 1.3783 & 1.3011 & 1.0594 & 0.4739 & 0.9389 & 0.3136 & 0.9950 \\
& & 400  & 1.0000 & 1.7839 & -0.0003 & 0.0558 & 0.0558 & 1.2460 & 1.2731 & 0.9787 & 0.4560 & 0.9490 & 0.2207 & 1.0000 \\
& & 1000 & 1.0000 & 1.7846 &  0.0004 & 0.0359 & 0.0359 & 1.2902 & 1.2629 & 1.0216 & 0.4495 & 0.9435 & 0.1392 & 1.0000 \\

\midrule

\multirow{3}{*}{ST3} & \multirow{3}{*}{0.9723}
& 200  & 1.0000 & 0.9695 & -0.0028 & 0.1024 & 0.1024 & 2.0976 & 2.2643 & 0.9264 & 0.3317 & 0.9505 & 0.4062 & 0.8240 \\
& & 400  & 1.0000 & 0.9692 & -0.0031 & 0.0717 & 0.0717 & 2.0542 & 2.0535 & 1.0003 & 0.3318 & 0.9445 & 0.2783 & 0.9160 \\
& & 1000 & 1.0000 & 0.9715 & -0.0008 & 0.0447 & 0.0447 & 1.9962 & 1.9576 & 1.0197 & 0.3337 & 0.9485 & 0.1729 & 0.9825 \\

\bottomrule
\end{tabular*}

\begin{tablenotes}
\item {\it Note:} The column Succ.\ reports the proportion of Monte Carlo replications satisfying all numerical validity checks,
including successful model fitting, successful root-finding,
invertible regularised Hessians, and the stability condition
$|\partial_c\varphi(\hat c,\hat\theta)|>10^{-3}$. Var$_{\mathrm{emp}}$/Var$_{\mathrm{th}}$ assesses the accuracy of the asymptotic variance approximation; values close to one indicate strong agreement between empirical and theoretical variability. 
\end{tablenotes}
\end{table*}

\subsection{Results}

\subsubsection{Consistency}

Table~\ref{tab:asymptotic_validation} shows that the plug-in estimator $\hat c$ is highly accurate across all scenarios. In every
configuration, the empirical mean $\bar{\hat c}$ is very close to the
true cutoff $c^*$, and the bias remains small in absolute value. In the skew-normal settings, the largest bias is observed in the weak
separation scenario SN3 at $n=200$ (Bias $=-0.0052$), decreasing to
$-0.0016$ at $n=400$ and $-0.0027$ at $n=1000$. In the skew-$t$
settings, the largest bias is recorded in ST1 at $n=200$
(Bias $=0.0040$) and in ST3 at $n=400$ (Bias $=-0.0031$), with no
evidence of systematic distortion. Overall, the estimator is nearly
unbiased in all regular settings and remains well-behaved even in the
most difficult heavy-tailed cases.

The root mean squared error decreases monotonically with $n$ in all
scenarios, confirming the consistency predicted by Theorem~\ref{thm:asymptotics}. For example, in scenario SN1, the RMSE
decreases from $0.0636$ at $n=200$ to $0.0453$ at $n=400$ and $0.0285$ at $n=1000$, while in ST3 it falls from $0.1024$ to $0.0717$ and $0.0447$, respectively. The standard deviation exhibits the same pattern. These results show that the plug-in estimator converges reliably to the true cutoff under both skew-normal and skew-$t$ sampling regimes.

\subsubsection{Asymptotic normality and variance estimation}

The asymptotic normal approximation is assessed through the ratio
Var$_{\mathrm{emp}}$/Var$_{\mathrm{th}}$, where Var$_{\mathrm{emp}}$
denotes the empirical variance of $\sqrt n(\hat c-c^*)$ and
Var$_{\mathrm{th}}$ is the average plug-in asymptotic variance. Values
close to one indicates that the first-order asymptotic theory captures well the sampling variability of the estimator.

\paragraph{Skew-normal scenarios.}
For SN1, the ratio remains extremely close to one across all sample
sizes, ranging from $0.996$ to $1.022$, which indicates excellent
agreement between empirical and theoretical variability even at
$n=200$. Scenario SN2 also shows very satisfactory performance, with
ratios in $[1.032,1.059]$. The only noticeable upward deviation occurs
at $n=200$, where the empirical variance is about $6\%$ larger than the
average plug-in variance, but this difference is small and does not
persist as $n$ increases. In the weak-separation scenario SN3, the
ratios are $1.037$, $1.000$, and $1.004$ for $n=200$, $400$, and
$1000$, respectively. Thus, even in the least regular skew-normal case,
the asymptotic variance approximation is remarkably accurate.

\paragraph{Skew-$t$ scenarios.}
The skew-$t$ scenarios exhibit larger asymptotic variances than the
corresponding skew-normal settings, reflecting the added uncertainty
induced by heavier tails and the estimation of the tail parameter. Even
so, the asymptotic approximation remains highly reliable. In ST1, the
ratio lies in $[0.985,1.048]$; in ST2, it lies in $[0.979,1.059]$; and
in ST3, it lies in $[0.926,1.020]$. The greatest discrepancy is
observed in ST3 at $n=200$, where the ratio equals $0.926$. This is the
most challenging configuration, combining weak separation and heavy
tails, and it is precisely the setting in which the local slope
diagnostic $|\partial_c\varphi|$ is smallest. Still, even there the
difference between empirical and theoretical variance remains moderate,
and the approximation improves quickly as $n$ increases.

These findings are fully consistent with
Remark~\ref{rem:variance-interpretation}. Scenarios with smaller values
of $|\partial_c\varphi|$ --- especially SN3 and ST3 --- display larger
RMSE, larger asymptotic variance, and slightly greater finite-sample
deviations from the first-order approximation. Conversely, when the estimating equation is steeper near the optimum, the cutoff is more
sharply identified and the asymptotic variance formula is especially
accurate.

\subsubsection{Coverage of Wald confidence intervals}

The empirical coverage of the nominal $95\%$ Wald confidence intervals is
globally satisfactory. Across all scenarios and sample sizes, the
coverage probabilities remain close to the nominal level, with most
values lying between $0.939$ and $0.952$. In the skew-normal scenarios,
coverage is very stable: SN1 ranges from $0.944$ to $0.948$, SN2 from
$0.941$ to $0.946$, and SN3 from $0.942$ to $0.948$. In the skew-$t$
scenarios, the corresponding ranges are $0.940$ to $0.952$ for ST1,
$0.939$ to $0.949$ for ST2, and $0.945$ to $0.951$ for ST3. Hence, the
Wald intervals perform well not only in regular skew-normal settings
but also in the more demanding heavy-tailed cases.

The average interval length decreases systematically with $n$, in
agreement with the $n^{-1/2}$ rate implied by the asymptotic theory. In
scenario SN1, for instance, the mean length decreases from $0.2477$ at
$n=200$ to $0.1755$ at $n=400$ and $0.1118$ at $n=1000$. Similarly, in
ST2 the mean length decreases from $0.3136$ to $0.2207$ and $0.1392$,
and in ST3 from $0.4062$ to $0.2783$ and $0.1729$. As expected, the
widest intervals occur in the weakest-identifiability settings, again
particularly in ST3, where heavier tails and smaller local slope values
inflate the uncertainty of the cutoff estimator.

\subsubsection{Numerical stability and bracketing of the estimating equation}

The numerical implementation was highly stable throughout the
simulation study. The proportion of successful Monte Carlo replications
was equal or extremely close to one in all scenarios, with the lowest
success rate still above $0.998$. This indicates that separate SMSN
maximum likelihood fitting, root-finding, and plug-in variance
evaluation are computationally reliable under the proposed algorithm.

The numerical bracketing condition for the estimating equation,
summarised in the column \textit{Bracket}, was satisfied in essentially
all replications for the skew-normal scenarios and for the regular
skew-$t$ settings ST1 and ST2, especially once $n\geq 400$. The only
substantial difficulty arose in ST3, where the bracketing rate was
$0.824$ at $n=200$, $0.916$ at $n=400$, and $0.983$ at $n=1000$. This pattern is again consistent with weaker local identifiability in the
heavy-tailed weak-separation regime: the flatter estimating equation
makes numerical localisation of the root more delicate in smaller
samples, although this difficulty largely disappears as the sample size
increases.

Overall, the simulation study confirms that the proposed plug-in
estimator, its asymptotic variance formula, and the associated Wald
inference perform reliably across the full range of distributional
settings considered here.

\subsubsection{Normal Q--Q plots}

Figure~\ref{fig:qqplots-scenarios} presents normal Q--Q plots of the
standardised statistic
\[
W_n
=
\frac{\sqrt{n}(\hat c-c^*)}{\sqrt{\bar{\hat V}}},
\]
where $\bar{\hat V}$ denotes the average plug-in asymptotic variance
across Monte Carlo replications. Under the asymptotic theory developed
in Theorem~\ref{thm:asymptotics}, $W_n$ should be approximately
standard normal.

The graphical evidence is fully consistent with the numerical results
reported in Table~\ref{tab:asymptotic_validation}. In the regular
skew-normal scenarios SN1 and SN2, the empirical quantiles lie very
close to the reference normal line at all sample sizes, including
$n=200$, which confirms the excellent accuracy of the first-order
asymptotic approximation already in moderate samples. In the
weak-separation scenario SN3, mild deviations from normality are still
visible at $n=200$, particularly in the tails, but these discrepancies
become much less pronounced at $n=400$ and are small at $n=1000$.
This behaviour mirrors the slightly greater finite-sample difficulty of
that scenario, as reflected in its lower local slope values and larger
sampling variability.

A similar pattern is observed in the skew-$t$ settings. In ST1 and ST2, the Q--Q plots show good agreement with the normal reference line,
with only modest tail dispersion in the smallest sample size and clear
improvement as $n$ increases. The most challenging case is ST3, which
combines heavy tails with weak separation. There, departures from the
reference line are more visible at $n=200$, but the alignment improves
substantially at $n=400$ and becomes much closer to linearity at
$n=1000$. Overall, the Q--Q plots provide visual confirmation that the
scaled cutoff estimator converges to normality across all scenarios,
and that the approximation improves systematically with sample size,
especially in the more weakly identifiable settings.

\begin{figure*}[h!]
\centering
\includegraphics[width=\textwidth]{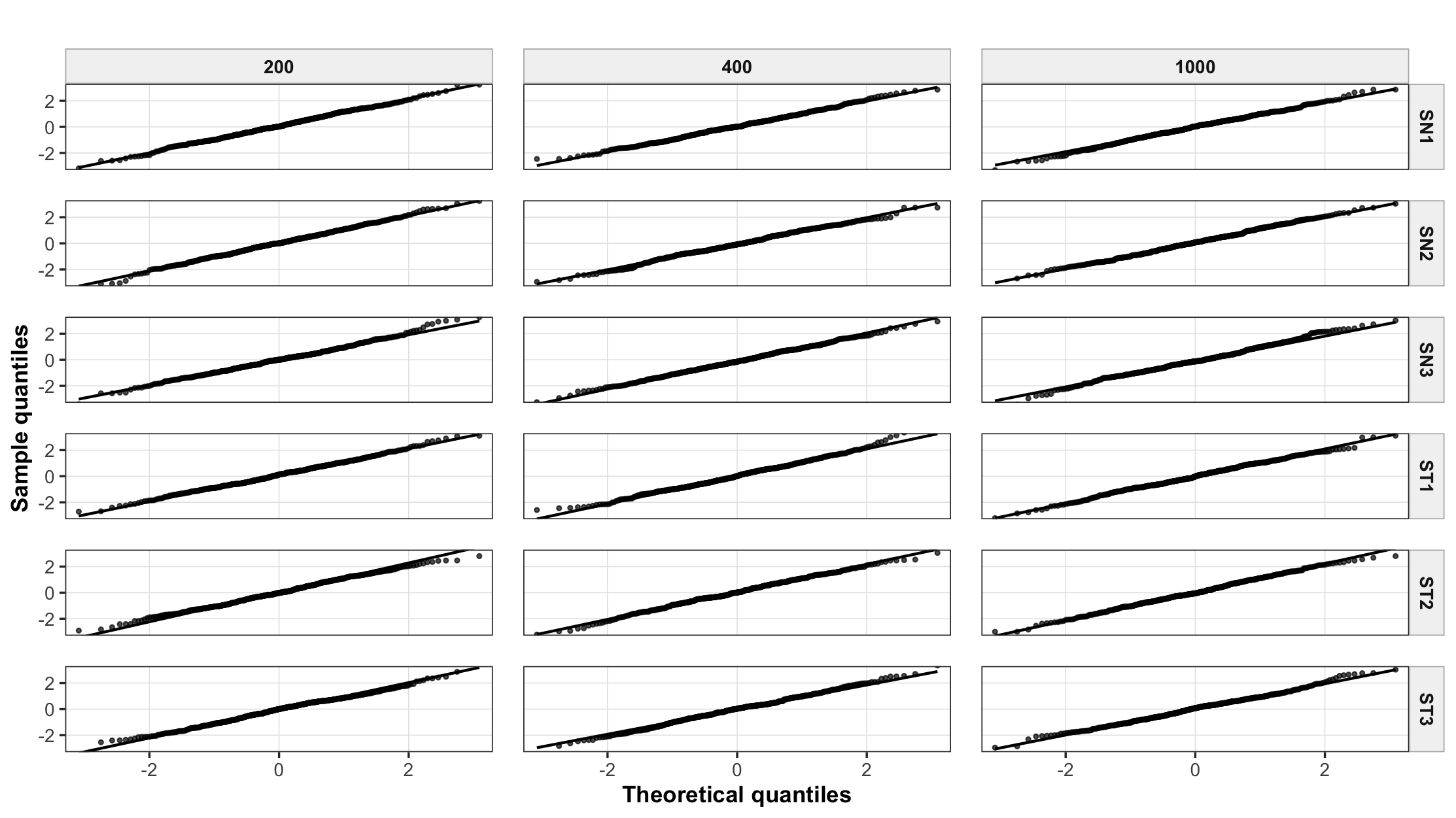}
\caption{Normal Q--Q plots of the standardised scaled cutoff estimator
$W_n = \sqrt{n}(\hat{c} - c^*) \big/ \sqrt{\bar{\hat{V}}}$,
where $\bar{\hat{V}}$ denotes the mean of the plug-in asymptotic
variance estimates $\hat{V}$ across the $B = 2000$ Monte Carlo
replications. Under the asymptotic theory of Theorem~\ref{thm:asymptotics} , $W_n$ converges in
distribution to a standard normal variable. The plots show excellent agreement between empirical and theoretical quantiles in the skew-normal scenarios (SN1, SN2, SN3) at all sample sizes, with only minor departures in the lower tail of SN3 at $n = 200$ that disappear as $n$ increases.
Among the skew-$t$ scenarios, ST1 and ST3 exhibit satisfactory alignment
at all sample sizes, while ST2 shows mild heavy-tail deviations at
$n = 200$ that progressively diminish as $n$ increases, consistent
with the larger asymptotic variances and the additional estimation
uncertainty associated with the tail parameter $\nu$.}
\label{fig:qqplots-scenarios}
\end{figure*}

\subsubsection{Role of the local identifiability diagnostic}

A structural feature of the asymptotic variance formula $V$ is that estimation precision is governed jointly by the uncertainty in $\hat\theta$, encoded in $\Sigma$, and by the local slope of the estimating equation at the decision boundary, encoded in $|\partial_c\varphi(c^*,\theta^*)|$.
As established in Remark~\ref{rem:variance-interpretation}, a smaller
slope corresponds to a flatter boundary and inflates the
asymptotic variance, while a steeper slope yields a more stably identified cutoff.

This inverse relationship is clearly reflected in the
simulation results.
The weak-separation scenarios SN3 and ST3 exhibit the
smallest values of $|\partial_c\varphi|$, approximately
$0.33$--$0.36$, and correspondingly the largest asymptotic
variances, in the range $1.73$--$2.26$.
In contrast, the asymmetric scenarios SN2 and ST2, which
have the largest values of $|\partial_c\varphi|$,
approximately $0.45$--$0.54$, produce the smallest
variances, in the range $0.84$--$1.30$.
The monotone inverse relationship between
$|\partial_c\varphi|$ and estimation variability is
consistent across all six scenarios and provides direct
empirical corroboration of the theoretical interpretation
advanced in Remark~\ref{rem:variance-interpretation}.
These findings support the use of
$|\partial_c\varphi(\hat{c},\hat\theta)|$ as a routine
diagnostic for the reliability of the estimated decision
boundary: values close to zero signal weak local
identifiability and should prompt the use of bootstrap
confidence intervals as a more robust alternative to the
Wald interval.

\section{Application to serological data}
\label{sec:application}

\subsection{Data and model selection}
\label{sec:data-model-selection}

We apply the proposed methodology to serological data from
a SARS-CoV-2 immunological study \cite{rosado2020multiplex}, in which antibody concentrations against four antigens --- Stri, RBD, S1, and
S2 --- were measured by IgG assay in a sample of
seronegative ($D=0$) and seropositive ($D=1$) individuals. A detailed description of lab procedures can be found in the original study \cite{rosado2020multiplex}.
All biomarker values were analysed on the $\log_{10}$ scale,
and model selection was performed separately within each
group using the Bayesian Information Criterion (BIC) to
choose between a skew-normal and a skew-$t$ distribution. 

The results, reported in Table~\ref{tab:model-selection-igg},
reveal a consistent pattern: the seronegative group is
uniformly better described by a skew-$t$ distribution across
all four biomarkers, with $\Delta\mathrm{BIC}=
\mathrm{BIC}_\mathrm{SN}-\mathrm{BIC}_\mathrm{ST}$
ranging from $11.55$ (S2\_IgG) to $64.41$ (RBD\_IgG),
providing strong to very strong evidence in favour of the
heavier-tailed model for this group.
The seropositive group, by contrast, is consistently
better described by a skew-normal distribution, with
$\Delta\mathrm{BIC}$ negative in all four cases
(range: $-1.13$ to $-5.37$).
This asymmetry is biologically interpretable: seropositive
individuals exhibit a broad distribution of antibody
concentrations reflecting heterogeneous immune responses,
whereas the seronegative distribution is more peaked but
with heavier tails arising from non-specific low-level
reactivity and measurement variation.

Accordingly, for each biomarker, all subsequent analyses
use a skew-$t$ model for the seronegative group and a
skew-normal model for the seropositive group.

\begin{table}[ht]
\centering
\caption{Model selection between skew-normal (SN) and skew-$t$ (ST)
distributions for IgG biomarkers, performed separately within each group
using BIC. Positive $\Delta\mathrm{BIC}$ favours the ST model; negative
values favour the SN model.}
\label{tab:model-selection-igg}
\small
\begin{tabular}{@{}llcccc@{}}
\toprule
\textbf{Biomarker} & \textbf{Group} & \textbf{Selected} &
  $\mathrm{BIC}_\mathrm{SN}$ & $\mathrm{BIC}_\mathrm{ST}$ &
  $\Delta\mathrm{BIC}$ \\
\midrule
\multirow{2}{*}{Stri\_IgG}
& Seronegative & ST &  124.86 &  90.07 & 34.78 \\
& Seropositive & SN &  273.15 & 274.28 & -1.13 \\
\midrule
\multirow{2}{*}{RBD\_IgG}
& Seronegative & ST &   17.99 & -46.42 & 64.41 \\
& Seropositive & SN &  342.39 & 347.75 & -5.37 \\
\midrule
\multirow{2}{*}{S1\_IgG}
& Seronegative & ST & -160.57 & -198.92 & 38.35 \\
& Seropositive & SN &  306.73 &  312.10 & -5.37 \\
\midrule
\multirow{2}{*}{S2\_IgG}
& Seronegative & ST &   -3.64 &  -15.19 & 11.55 \\
& Seropositive & SN &  326.03 &  331.40 & -5.37 \\
\bottomrule
\end{tabular}
\end{table}

\subsection{Decision-theoretic cutoff estimation}
\label{sec:decision-theoretic-cutoff-estimation}

\subsubsection{Choice and numerical verification of the admissible interval \texorpdfstring{$[a,b]$}{[a,b]} in the serological application}
For the serological data analysis, the admissible interval $[a,b]$
used for cutoff optimisation was constructed directly from the
observed biomarker values on the $\log_{10}$ scale. More precisely,
for each biomarker, we pooled the seronegative and seropositive
observations and defined an initial interval from extreme empirical
quantiles of the pooled sample, namely the $0.5$th and $99.5$th
percentiles. This choice yields a data-driven optimisation region
covering almost all observed values while limiting the influence of
extreme observations and numerical instability in the far tails.

After fitting the group-specific models, the
resulting interval was checked numerically through the estimating
function rather than through the likelihood ratio itself. Specifically,
for each biomarker we verified that $\varphi(a;\hat\theta)<0$ and $\varphi(b;\hat\theta)>0.$
This sign change guarantees that the root of the estimating equation
$\varphi(c,\hat\theta)=0$ is bracketed inside $[a,b]$. For all four
IgG biomarkers analysed here, the initial interval satisfied this
condition without requiring iterative expansion, indicating that the
empirical quantile construction was sufficient to localise the
decision-relevant region of the biomarker distribution.

\begin{table}[ht]
\centering
\caption{Estimated optimal cutoff $\hat{c}^*$, parametric and empirical
Youden cutoffs, associated risks, 95\% Wald confidence intervals,
local slope diagnostic, and admissible interval diagnostics for the
four SARS-CoV-2 IgG biomarkers under
$(\lambda_0,\lambda_1;\pi_0,\pi_1) = (1,3;0.9,0.1)$. Cutoffs and
CIs are reported on the original scale.}
\label{tab:real-data-IgG-cutoffs-interval}
\tiny
\setlength{\tabcolsep}{2pt}
\renewcommand{\arraystretch}{1.0}
\begin{tabular}{@{}lccccccccccccc@{}}
\toprule
\textbf{Biomarker} & $f_0$ & $f_1$ & $\hat{c}^*$ & $\hat{c}_Y$ &
  $\hat{c}_Y^\mathrm{emp}$ & $\widehat{R}(\hat{c}^*)$ &
  $\widehat{R}(\hat{c}_Y)$ & $\widehat{R}(\hat{c}_Y^\mathrm{emp})$ &
  CI$_{95\%}$ & $|\partial_c\varphi|$ & $a$ & $b$ & Bracket \\
\midrule
Stri\_IgG & ST & SN & 342.49 & 211.24 & 468.0 & 0.0567 & 0.0650 & 0.0596 & [339.38, 345.61] & 0.3347 & 1.2641 & 3.9560 & Yes \\
RBD\_IgG  & ST & SN & 305.73 & 191.03 & 140.0 & 0.0521 & 0.0598 & 0.0771 & [302.86, 308.64] & 0.3066 & 1.3686 & 4.1446 & Yes \\
S1\_IgG   & ST & SN & 141.30 & 106.16 & 116.5 & 0.0884 & 0.1039 & 0.0949 & [140.24, 142.36] & 1.5701 & 1.4431 & 3.6415 & Yes \\
S2\_IgG   & ST & SN & 264.82 & 179.71 & 168.0 & 0.0806 & 0.0935 & 0.0992 & [263.01, 266.65] & 0.7436 & 1.4344 & 3.8999 & Yes \\
\bottomrule
\end{tabular}
\medskip\\
{\footnotesize\textit{Note:} $f_0$ and $f_1$ are the fitted distributions in the
seronegative and seropositive groups. The quantities $a$ and $b$ are
the lower and upper bounds of the admissible interval on the
$\log_{10}$ scale. Bracket indicates whether the sign condition
$\varphi(a;\hat\theta) < 0 < \varphi(b;\hat\theta)$ was satisfied
without further expansion.}
\end{table}

The optimal cutoff $\hat c^*$ was estimated as the plug-in
solution of $\varphi(c,\hat\theta)=0$, where $\hat\theta$
collects the separate maximum likelihood estimators obtained
under a skew-$t$ model for the seronegative group and a
skew-normal model for the seropositive group.
Wald $95\%$ confidence intervals were constructed from the
asymptotic variance formula in
Proposition~\ref{prop:separate-mle}, with numerical
derivatives and observed information computed by Richardson
extrapolation.
For comparison, the parametric Youden cutoff $\hat c_Y$ was
obtained as the solution of
$f_1(c;\hat\theta_1)=f_0(c;\hat\theta_0)$, whereas the
empirical Youden cutoff $\hat c_Y^{\mathrm{emp}}$ was
computed directly from the sample as the maximiser of the
empirical Youden index as in \eqref{eq:youden-empirical}.

Results for the four IgG biomarkers under the asymmetric
decision setting
$(\lambda_0,\lambda_1;\pi_0,\pi_1)=(1,3;0.9,0.1)$ are
reported in Table~\ref{tab:real-data-IgG-cutoffs-interval}.
Since
\[
\frac{\lambda_0\pi_0}{\lambda_1\pi_1}=3,
\]
the optimal cutoff is expected to exceed the Youden cutoff,
and this is observed for all biomarkers.
On the original scale, $\hat c^*$ ranges from $141.30$
for S1\_IgG to $342.49$ for Stri\_IgG, while the
corresponding estimated risks range from $0.0521$
for RBD\_IgG to $0.0884$ for S1\_IgG.

For all four biomarkers, $\hat c^*$ yields lower estimated
risk than both $\hat c_Y$ and
$\hat c_Y^{\mathrm{emp}}$.
Relative to the parametric Youden rule, the reduction in
estimated risk ranges from $0.0077$ for RBD\_IgG to
$0.0155$ for S1\_IgG; relative to the empirical Youden
rule, it ranges from $0.0029$ for Stri\_IgG to $0.0250$,
again with the largest gain observed for RBD\_IgG.
The local slope diagnostic
$|\partial_c\varphi(\hat c^*,\hat\theta)|$ varies
substantially across biomarkers, from $0.3066$
for RBD\_IgG to $1.5701$ for S1\_IgG, and this
variation is reflected in the confidence intervals.
The narrowest interval is obtained for S1\_IgG,
$[140.24,\,142.36]$, whereas wider intervals occur for
biomarkers with smaller local slope, such as RBD\_IgG
($[302.86,\,308.64]$) and Stri\_IgG
($[339.38,\,345.61]$).

The admissible interval for numerical optimisation was
constructed from extreme empirical quantiles of the pooled
$\log_{10}$-transformed observations.
For all four biomarkers, the resulting interval satisfied
$\varphi(a;\hat\theta)<0<\varphi(b;\hat\theta)$ without
further expansion, indicating that the decision-relevant
root was successfully bracketed in every case.

\subsection{Sensitivity analysis for the RBD\_IgG biomarker}

To illustrate the effect of cost and prevalence asymmetry on
the optimal threshold, we examine the RBD\_IgG biomarker in
detail under four representative decision-theoretic
scenarios.
RBD\_IgG was selected because it is particularly informative
from both a practical and a methodological perspective:
among the four biomarkers in
Table~\ref{tab:real-data-IgG-cutoffs-interval}, it attains
the lowest estimated optimal risk
($\widehat R(\hat c^*)=0.0521$), the largest reduction in
estimated risk relative to the empirical Youden cutoff
($0.0250$), and the smallest local slope diagnostic
($|\partial_c\varphi(\hat c^*,\hat\theta)|=0.3066$).
The fitted distributions and decision thresholds under the
four scenarios are shown in
Figure~\ref{fig:rbd-densities}, and the corresponding
numerical results are reported in
Table~\ref{tab:RBD_decision}.

The results highlight two complementary features of the
proposed framework.
First, the ordering of the optimal cutoffs across scenarios
is determined by the monotonicity of the likelihood ratio
$\Lambda(c;\theta)$.
Since the optimal cutoff is the unique solution of
$\Lambda(c^*;\theta)=\frac{\lambda_0\pi_0}{\lambda_1\pi_1}$,
and $\Lambda(\cdot;\theta)$ is strictly increasing, larger
cost--prevalence ratios imply larger optimal cutoffs.
Hence the ordering $c_A^*<c_C^*<c_B^*<c_D^*$ that appears below
follows directly from the ordering of the target ratios
across the four scenarios.

Second, Proposition~\ref{prop:youden-relationship} remains
useful as a local interpretation of the displacement from
the Youden cutoff under mild asymmetry.
It explains why, near the symmetric setting, the direction
of the shift is governed by the sign of
$\log\!\left(\frac{\lambda_0\pi_0}{\lambda_1\pi_1}\right)$.

However, this first-order expansion is local and should not
be interpreted as a quantitative approximation in strongly
asymmetric settings such as Scenario~D, where the target
ratio is far from the symmetric benchmark.
In that case, the relevant justification is the global
monotonicity result of
Proposition~\ref{prop:optimal-cutoff}.

\begin{figure*}[h!]
\centering
\includegraphics[width=\textwidth]{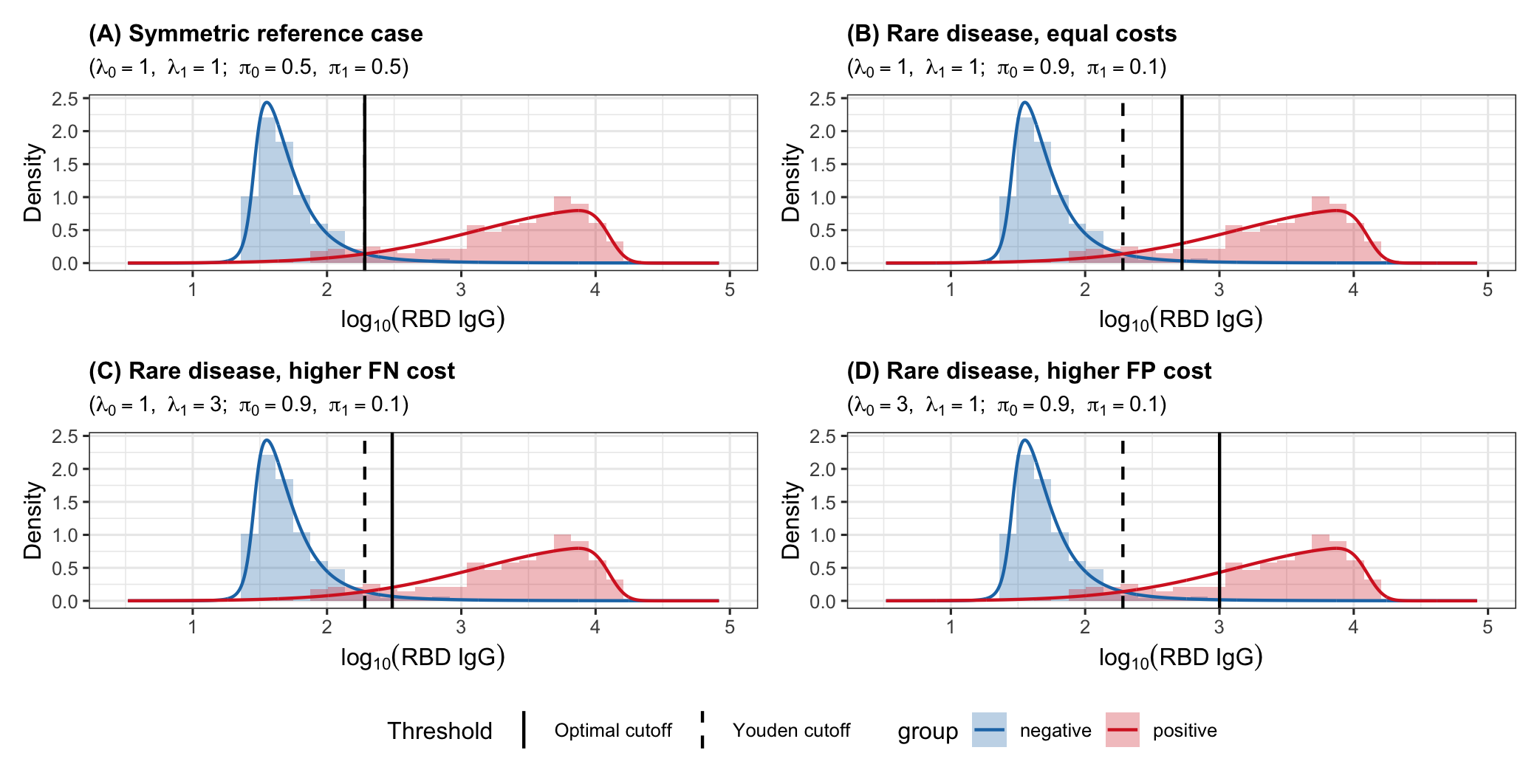}
\caption{
Estimated biomarker distributions and decision thresholds for the RBD\_IgG antibody under four decision-theoretic scenarios. Solid curves represent the fitted skew-$t$ distribution for seronegative individuals and skew-normal distribution for seropositive individuals, while histograms show the empirical distributions on the $\log_{10}$ scale. The vertical solid line denotes the optimal cutoff $c^*$ obtained by minimizing the weighted misclassification risk, and the dashed line denotes the Youden cutoff $c_Y$. Panels correspond to different combinations of misclassification costs $(\lambda_0,\lambda_1)$ and prevalences $(\pi_0,\pi_1)$:
(A) symmetric reference case $(1,1; 0.5,0.5)$;
(B) rare disease with equal costs $(1,1; 0.9,0.1)$;
(C) rare disease with higher false-negative cost $(1,3; 0.9,0.1)$;
(D) rare disease with higher false-positive cost $(3,1; 0.9,0.1)$.
The results illustrate how departures from the symmetric setting induce systematic shifts in the optimal cutoff relative to the Youden solution, reflecting the influence of prevalence and cost asymmetry on decision-making.
}
\label{fig:rbd-densities}
\end{figure*}

Figure~\ref{fig:rbd-roc-tangents} provides the corresponding ROC-based geometric interpretation for the four decision-theoretic scenarios. In each case, the optimal cutoff is identified by the point on the ROC curve whose tangent slope equals $\lambda_0\pi_0/(\lambda_1\pi_1)$, whereas the Youden cutoff corresponds to unit slope.

\begin{figure*}[h!]
\centering
\includegraphics[width=\textwidth]{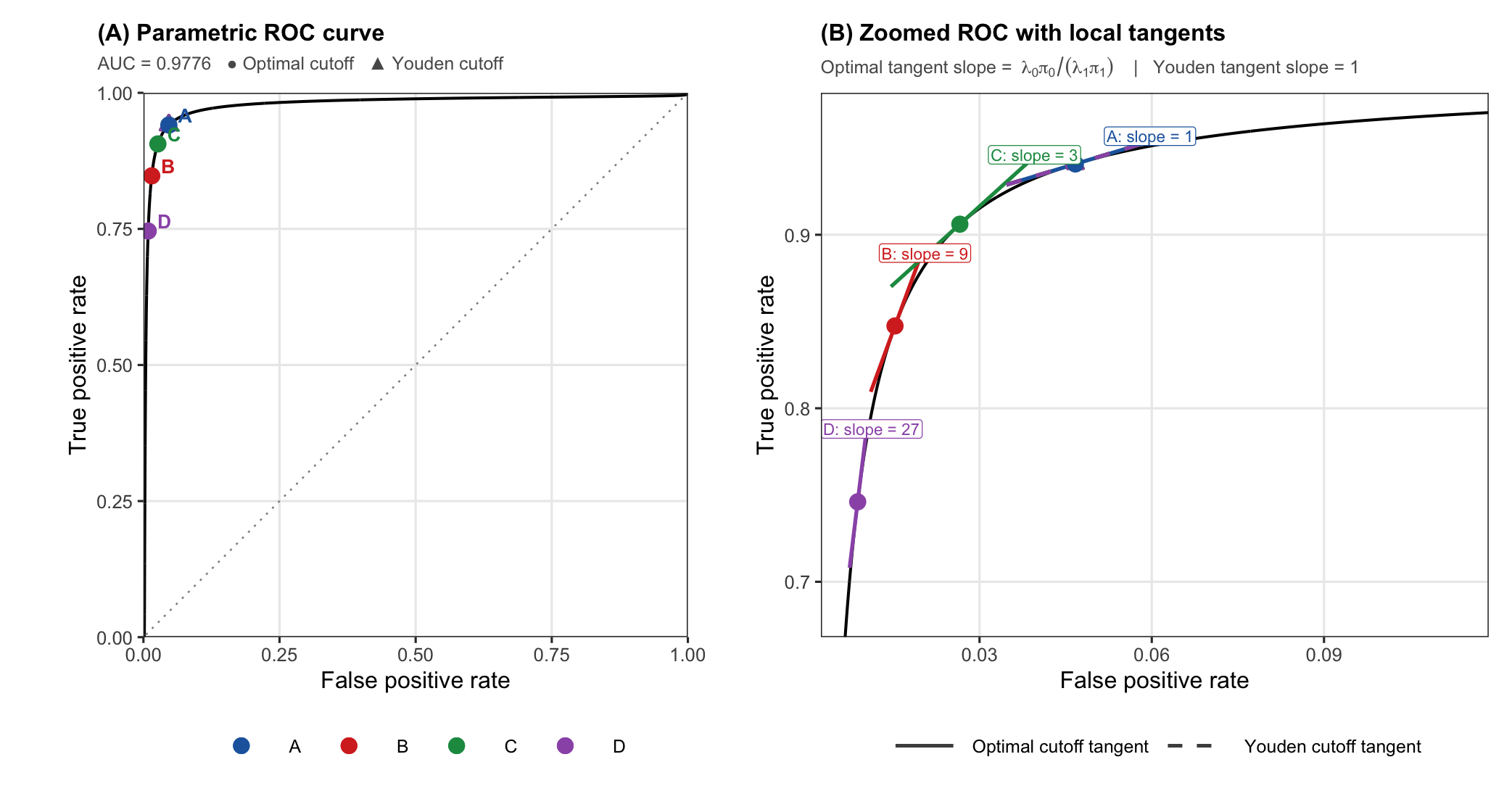}
\caption{
Parametric ROC curve for the RBD IgG biomarker under four decision-theoretic scenarios, with local tangent interpretations. Panel (A) shows the global ROC curve and the operating points associated with the optimal and Youden cutoffs. Panel (B) zooms into the relevant operating region and displays the local tangents at the optimal cutoffs, whose slopes equal $(\lambda_0\pi_0)/(\lambda_1\pi_1)$, together with the Youden tangent of slope 1. The figure provides a geometric interpretation of how prevalence and misclassification costs shift the optimal operating point away from the Youden solution.
}
\label{fig:rbd-roc-tangents}
\end{figure*}

Together, Figures~\ref{fig:rbd-densities} and~\ref{fig:rbd-roc-tangents} show that departures from the symmetric setting induce systematic movements of the optimal operating point along the ROC curve, which are then quantified numerically in Table~\ref{tab:RBD_decision}.

\begin{table*}[h!]
\centering
\caption{
Decision-theoretic optimal cutoffs for the RBD\_IgG antibody under four scenarios defined by different combinations of misclassification costs $(\lambda_0,\lambda_1)$ and prevalences $(\pi_0,\pi_1)$. 
The optimal cutoff $c^*$ minimizes the weighted misclassification risk, while $c_Y$ denotes the Youden cutoff. 
Results are reported on both the $\log_{10}$ scale and the original scale. 
We also report the optimal risk, the risk at the Youden cutoff, their difference ($\Delta$R), the asymptotic standard error (SE), Wald confidence intervals, and the identifiability quantity $|\partial_c \varphi|$.
\label{tab:RBD_decision}
}

\scriptsize
\setlength{\tabcolsep}{3pt}
\renewcommand{\arraystretch}{0.9}

\begin{tabular*}{\textwidth}{@{\extracolsep\fill}l c c c c c c c c c c c c @{\extracolsep\fill}}
\toprule
\textbf{Scen.} & $(\lambda_0,\lambda_1)$ & $(\pi_0,\pi_1)$ & $c^*$ & $c_Y$ & $c^*$ (orig.) & $c_Y$ (orig.) & $R(c^*)$ & $R(c_Y)$ & $\Delta$R & SE & CI$_{\text{orig}}$ & $|\partial_c \varphi|$ \\
\midrule

A & (1,1) & (0.5,0.5) & 2.2811 & 2.2811 & 191.03 & 191.03 & 0.0529 & 0.0529 & 0.0000 & 0.0019 & [189.40, 192.68] & 0.4004 \\

B & (1,1) & (0.9,0.1) & 2.7213 & 2.2811 & 526.39 & 191.03 & 0.0290 & 0.0479 & 0.0189 & 0.0028 & [519.87, 533.00] & 0.1279 \\

C & (1,3) & (0.9,0.1) & 2.4854 & 2.2811 & 305.75 & 191.03 & 0.0521 & 0.0598 & 0.0077 & 0.0021 & [302.87, 308.65] & 0.3066 \\

D & (3,1) & (0.9,0.1) & 3.0009 & 2.2811 & 1002.05 & 191.03 & 0.0491 & 0.1320 & 0.0829 & 0.0042 & [983.32, 1021.13] & 0.1542 \\

\bottomrule
\end{tabular*}

\begin{tablenotes}
\item {\it Note:} Differences between $c^*$ and $c_Y$ highlight the impact of asymmetric prevalence and misclassification costs on optimal decision thresholds. The quantity $|\partial_c \varphi|$ reflects local identifiability, with smaller values indicating greater sensitivity of the cutoff to parameter perturbations.
\end{tablenotes}

\end{table*}

\subsubsection{Scenario A: symmetric reference case $(\lambda_0=1,\lambda_1=1;\;\pi_0=0.5,\pi_1=0.5)$}
When costs and prevalences are equal, the optimality
condition reduces to $\Lambda(c^*;\theta)=1$, and
Proposition~\ref{prop:youden-relationship} guarantees
$\hat{c}^*=\hat{c}_Y$.
Both thresholds are equal to $2.281$ on the $\log_{10}$
scale, corresponding to $191.0$ antibody units on the
original scale.
The associated risk is $\hat{R}(\hat{c}^*)=0.0529$
and the confidence interval $[189.4,\,192.7]$ reflects
high precision.
This scenario provides the baseline against which the
effect of departures from symmetry is assessed.

\subsubsection{Scenario B: rare disease with equal costs $(\lambda_0=1,\lambda_1=1;\;\pi_0=0.9,\pi_1=0.1)$}
Introducing prevalence asymmetry while maintaining equal
costs modifies the target likelihood ratio to
\[
\frac{\lambda_0\pi_0}{\lambda_1\pi_1}
=
\frac{0.9}{0.1}
=9,
\]
substantially exceeding the unit value of Scenario~A.
By the monotone likelihood ratio property of
Proposition~\ref{prop:optimal-cutoff}, the optimal
threshold must shift to a region where $\Lambda(c;\theta)$ is
larger, i.e., further into the right tail of the
seronegative distribution.
The estimated cutoff $\hat{c}^*=2.721$ on the $\log_{10}$
scale corresponds to $526.4$ antibody units
(Figure~\ref{fig:rbd-densities}B), a substantial upward
displacement of $0.440$ on the $\log_{10}$ scale
relative to Scenario~A.
This shift is rational: in a population that is $90\%$
disease-free, the cost of false positives is amplified by
the high prevalence of the negative class, and the optimal
procedure accepts a higher false-negative rate in order
to reduce unnecessary positive classifications.
The risk at $\hat{c}^*$ is $0.0290$, compared to
$0.0479$ at the Youden cutoff, a reduction of
$\Delta\hat{R}=0.0189$ ($39\%$).
The local diagnostic $|\partial_c\varphi|=0.128$ is
the lowest across all four scenarios, reflecting the
scarcity of density from both distributions in the far
right tail; accordingly, the confidence interval
$[519.9,\,533.0]$ is substantially wider than in
Scenario~A.

\subsubsection{Scenario C: rare disease with higher
false-negative cost $(\lambda_0=1,\lambda_1=3;\;\pi_0=0.9,\pi_1=0.1)$}
Increasing the false-negative cost to $\lambda_1=3$
lowers the target likelihood ratio to
\[
\frac{\lambda_0\pi_0}{\lambda_1\pi_1}
=
\frac{0.9}{0.3}
=3,
\]
which lies between the target values of Scenarios~A
and~B.
Consistently with the ordering predicted by Proposition~\ref{prop:youden-relationship} and its first-order expansion~\eqref{eq:youden-expansion}, the
optimal cutoff $\hat{c}^*=2.485$ ($305.7$ antibody units)
is intermediate between the two:
\[
\hat{c}^*_A
=191.0, \quad \hat{c}^*_C =305.7, \quad \hat{c}^*_B=526.4.
\]
By shifting the threshold leftward relative to Scenario~B,
the procedure accepts more false positives in exchange
for reducing the costlier false negatives.
The risk reduction relative to the Youden cutoff is
$\Delta\hat{R}=0.0077$ ($13\%$), more modest than in
Scenario~B because the higher false-negative cost
partially counteracts the benefit of raising the threshold.
The diagnostic $|\partial_c\varphi|=0.307$ indicates a
moderately well-conditioned boundary, and the confidence
interval $[302.9,\,308.6]$ reflects good precision.

\subsubsection{Scenario D: rare disease with higher
false-positive cost $(\lambda_0=3,\lambda_1=1;\;\pi_0=0.9,\pi_1=0.1)$}
Assigning the highest cost to false positives drives the
target likelihood ratio to
\[
\frac{\lambda_0\pi_0}{\lambda_1\pi_1}
= \frac{2.7}{0.1}=27,
\]
the largest across all four scenarios.
The optimal cutoff $\hat{c}^*=3.001$ ($1002.0$ antibody
units) is located deep in the right tail of the
seronegative distribution (Figure~\ref{fig:rbd-densities}D),
reflecting the most conservative classification rule
considered.
The risk reduction relative to the Youden cutoff is
$\Delta\hat{R}=0.0829 (\,\approx 63\%)$, by far the largest among the
four scenarios: the Youden rule incurs a risk of
$0.1320$, more than twice the optimal $0.0491$.
This dramatic suboptimality of the symmetric threshold
under high false-positive cost underscores the practical
importance of the decision-theoretic approach.
The diagnostic $|\partial_c\varphi|=0.154$ and the wide
confidence interval $[983.3,\,1021.1]$ both reflect the
low local density at this extreme operating point.

\subsection{Cross-biomarker comparison and general remarks}

Table~\ref{tab:real-data-IgG-cutoffs-interval} shows a clear and
consistent pattern across the four IgG biomarkers analysed under the
asymmetric decision setting
$(\lambda_0,\lambda_1;\pi_0,\pi_1)=(1,3;0.9,0.1)$.
For every biomarker, the decision-theoretic optimal cutoff
$\hat c^*$ yields lower estimated risk than both the parametric
Youden cutoff $\hat c_Y$ and the empirical Youden cutoff
$\hat c_Y^{\mathrm{emp}}$.
Relative to the parametric Youden rule, the reduction in estimated
risk ranges from $0.0077$ for RBD\_IgG to $0.0155$ for S1\_IgG.
Relative to the empirical Youden rule, the reduction ranges from
$0.0029$ for Stri\_IgG to $0.0250$ for RBD\_IgG.
These gains illustrate the practical value of combining flexible SMSN
modelling with a decision rule that explicitly incorporates prevalence
and asymmetric misclassification costs.

The results also reveal substantial heterogeneity in the local
identifiability of the cutoff across biomarkers.
The diagnostic $|\partial_c\varphi(\hat c^*,\hat\theta)|$ ranges from
$0.3066$ for RBD\_IgG to $1.5701$ for S1\_IgG, with intermediate values
for Stri\_IgG ($0.3347$) and S2\_IgG ($0.7436$).
This variation is reflected directly in the precision of the estimated
thresholds: biomarkers with steeper estimating equations yield tighter
confidence intervals, whereas biomarkers with flatter local geometry
produce wider intervals.
In particular, S1\_IgG has the narrowest interval,
$[140.24,\,142.36]$, while RBD\_IgG and Stri\_IgG, which have the
smallest local slope values, display wider intervals,
$[302.86,\,308.64]$ and $[339.38,\,345.61]$, respectively.
This reinforces the interpretation of
$|\partial_c\varphi(\hat c^*,\hat\theta)|$ as a practically useful
measure of cutoff reliability.

More broadly, the application highlights the systematic effect of the
cost--prevalence ratio on threshold selection.
Since $\frac{\lambda_0\pi_0}{\lambda_1\pi_1}=3$,
the optimal cutoff is expected to exceed the Youden cutoff, and this
is precisely what is observed for all biomarkers.
Thus, the empirical findings are fully consistent with the theoretical
characterisation developed in Section~\ref{sec:cutoff}: under
low prevalence and higher false-negative cost, the decision-relevant
operating point is shifted away from the symmetric Youden solution
towards a more conservative threshold.
The comparison across biomarkers further shows that the practical
impact of this shift depends not only on the asymmetric decision
structure itself, but also on the local geometry of the fitted class
distributions around the decision boundary.

Finally, the numerical implementation proved stable for all four
biomarkers.
The admissible interval constructed from extreme empirical quantiles
of the pooled $\log_{10}$-transformed observations satisfied the
bracketing condition
$\varphi(a;\hat\theta)<0<\varphi(b;\hat\theta)$ in every case without further expansion.
This indicates that the proposed optimisation strategy reliably localised the relevant root of the estimating equation in the serological application.
Taken together, these results show that the Youden rule may be substantially misaligned with the actual decision problem under asymmetric prevalence and cost structures, and that the present decision-theoretic framework provides a principled and inferentially grounded alternative.

\section{Conclusions}\label{sec:discussion}

This paper developed a unified parametric framework for ROC analysis and optimal cutoff selection under the family of scale mixtures of skew-normal (SMSN) distributions, combining decision-theoretic threshold selection with formal asymptotic inference.

We characterised the optimal threshold as the solution of the likelihood-ratio equation that extends the classical Youden criterion
to settings with unequal disease prevalence and asymmetric
misclassification costs. Under a monotone likelihood ratio condition,
Proposition~\ref{prop:optimal-cutoff} established existence, uniqueness, and global optimality of the cutoff, while Proposition~\ref{prop:youden-relationship} quantified, through a first-order expansion, how the optimal threshold departs from the Youden solution as the cost--prevalence ratio moves away from the symmetric case. This provides a rigorous explanation for why the Youden rule may be suboptimal in practically relevant decision problems, especially when false negatives are more costly, or the target condition is rare.

We showed that the plug-in estimator $\hat c$ is consistent and asymptotically normal, with an explicit variance formula depending on two components: the estimation uncertainty of the model parameters and the local slope of the estimating equation at the optimal cutoff.
The quantity, $|\partial_c\varphi(c^*,\theta^*)|$, was interpreted as a local identifiability diagnostic: when the estimating equation is steep near the optimum, the cutoff is sharply defined and can be estimated with greater precision; when it is locally flat, the variance of $\hat c$ is inflated, and finite-sample behaviour becomes more delicate. This diagnostic arises naturally from the asymptotic theory and provides a practical tool for assessing the stability of the estimated threshold.

The simulation study based on six scenarios under skew-normal and skew-$t$ models provides strong empirical support for the theory. Across all scenarios, the estimator $\hat c$ displayed negligible bias and decreasing RMSE as the sample size increased. Moreover, the empirical variance of $\sqrt{n}(\hat c-c^*)$ was consistently close to the average plug-in asymptotic variance, with the ratio Var$_{\mathrm{emp}}$/Var$_{\mathrm{th}}$ remaining close to one throughout. Wald confidence intervals achieved coverage probabilities close to the nominal level in all settings considered. The most challenging configuration was the heavy-tailed weak-separation scenario, which exhibited smaller values of the local slope diagnostic, larger interval lengths, and more delicate numerical bracketing in smaller samples, exactly as predicted by the theoretical
analysis.

The application of the methodology developed in this paper to SARS-CoV-2 IgG biomarkers illustrates the practical value of the proposed framework. Using a skew-$t$ model for the seronegative group and a skew-normal model for the seropositive group, the optimal cutoff was estimated under the asymmetric decision setting $(\lambda_0,\lambda_1;\pi_0,\pi_1)=(1,3;0.9,0.1)$.
For all four biomarkers retained in the final analysis, the estimated
optimal threshold exceeded both the parametric and empirical Youden
cutoffs, in agreement with the theoretical implication of a
cost--prevalence ratio equal to three. More importantly, the optimal cutoff yielded uniformly lower estimated risk than either Youden-based alternative. Relative to the parametric Youden rule, the reduction in estimated risk ranged from $0.0077$ to $0.0155$; relative to the empirical Youden rule, it ranged from $0.0029$ to $0.0250$, with the largest gain observed for the RBD IgG biomarker. The local identifiability diagnostic also proved informative in practice, with biomarkers having smaller values of $|\partial_c\varphi|$ producing visibly wider confidence intervals.

\subsection*{Limitations and extensions}

The main theoretical results rely on a monotone likelihood ratio
assumption, which is natural and convenient but not automatic for the
full SMSN family when the two groups differ simultaneously in location,
scale, skewness, and tail behaviour. Thus, the MLR condition should be viewed as a sufficient condition for global uniqueness, not as a necessary one.
In practice, the numerical sign-change structure of $c\mapsto\varphi(c,\hat\theta)$ remains essential and should be checked
before reporting the estimated cutoff. The admissible-interval strategy adopted in this paper provides a stable approach for this verification.

In this paper, the prevalence weights $(\pi_0,\pi_1)$ were treated as fixed and known, so that the cutoff problem could be studied under a prespecified decision setting. In applications where prevalences must themselves be
estimated, their sampling variability would introduce an additional source of uncertainty into cutoff inference. Although the present framework could be extended to accommodate this case, a full treatment of estimated prevalences is beyond the scope of the current paper and is left for future work.

Although Wald inference performed well in the simulation settings considered here, weak-separation or heavy-tailed regimes may still challenge first-order approximations in smaller samples. Bootstrap confidence intervals, obtained by resampling independently within groups, provide a natural complementary approach and deserve a systematic comparison with Wald procedures in future work, especially under non-MLR configurations or near-flat estimating equations.

The numerical results reported here were obtained with a dedicated \textsf{R} implementation of plug-in cutoff estimation, asymptotic variance calculation, Wald inference, local identifiability diagnostics, and admissible-interval verification.


\subsection*{Author contributions}

Conceptualization: R.P., H.M., T.D.D. Methodology: R.P., H.M., T.D.D.
Software: R.P., T.D.D. Validation: R.P., H.M., T.D.D. Formal
analysis: R.P., T.D.D. Investigation: R.P., H.M., T.D.D. All authors
have read and agreed to the published version of the manuscript.

\subsection*{Funding}

This work was funded by national funds through FCT -- Fundação para a
Ciência e a Tecnologia, I.P., under CEAUL Research Unit,
UID/00006/2025, DOI:
\url{https://doi.org/10.54499/UID/00006/2025}.

\subsection*{Conflicts of interest}

The authors declare no conflicts of interest.

\subsection*{Data and code availability}

The data analysed in Section~\ref{sec:application} are from
\citet{rosado2020multiplex}. The \textsf{R} implementation of plug-in
cutoff estimation, asymptotic variance calculation, Wald inference,
local identifiability diagnostics, and admissible-interval verification
is available from the corresponding author upon reasonable request.

\bibliographystyle{abbrvnat}
\bibliography{references}

\end{document}